\documentclass[runningheads]{llncs}
\usepackage[colorlinks=true]{hyperref}
\usepackage{enumitem}
\usepackage{amsmath}
\usepackage{amssymb}
\usepackage{mathtools}
\usepackage[official]{eurosym}
\usepackage{xcolor}
\usepackage{pgfplots}
\pgfplotsset{compat=1.13} 
\usepackage{float}
\usepackage{color}
\usepackage{tikz}
\usetikzlibrary{arrows.meta}
\usetikzlibrary{arrows,automata}
\usetikzlibrary{shapes,decorations,patterns,positioning}
\usepackage{makecell}
\usepackage{subcaption}
\usepackage{tikz-3dplot}
\usepackage{listings}
\usepackage{multirow}
\usepackage{changepage}
\usepackage{marginnote}
\def\esat{{\sat}}

\title{Multi-player Equilibria Verification for Concurrent Stochastic Games}
\author{Marta Kwiatkowska\inst{1} \and Gethin Norman\inst{2} \and David~Parker\inst{3} \and Gabriel Santos\inst{1}}
\institute{Department of Computing Science, University of Oxford, UK
\and
School of Computing Science, University of Glasgow, UK
\and School of Computer Science, University of Birmingham, UK}

\renewcommand{\emptyset}{\varnothing}

\DeclareMathOperator{\opt}{opt}

\usepackage{scalerel}

\newcommand\scale[2]{\vstretch{#1}{\hstretch{#1}{#2}}}

\newcommand{\appref}[1]{Appendix~\ref{#1}}
\newcommand{\sectref}[1]{Section~\ref{#1}}

\newcommand{\figref}[1]{Figure~\ref{#1}}

\newcommand{\egref}[1]{Example~\ref{#1}}

\newcommand{\eqnref}[1]{(\ref{#1})}

\newcommand{\propref}[1]{Proposition~\ref{#1}}
\newcommand{\lemref}[1]{Lemma~\ref{#1}}
\newcommand{\defref}[1]{Definition~\ref{#1}}

\newcommand{\assumref}[1]{Assumption~\ref{#1}}

\newcommand{\defdefref}[2]{Definitions~\ref{#1} and \ref{#2}}

\spnewtheorem{assumption}{Assumption}{\bfseries}{\itshape}
\newcounter{exampcount}
\setcounter{exampcount}{0}
\newenvironment{examp}
{\refstepcounter{exampcount}
\vskip6pt\noindent
{\bf Example \arabic{exampcount}.}}
{}
\newcommand{\startpara}[1]{{%
\vskip6pt\noindent
{\bf #1.}}}

\newcommand{\startparai}[1]{{%
\vskip6pt\noindent
{\em #1.}}}

\def\ra{{\rightarrow}}

\def\cC{{\mathcal{C}}}

\def\Nset{\mathbb{N}}
\def\Qset{\mathbb{Q}}
\def\Rset{\mathbb{R}}

\def\Eset{\mathbb{E}}

\def\ra{\rightarrow} %
\def\rmdef{\,\stackrel{\mbox{\rm {\tiny def}}}{=}}
\newcommand{\true}{\mathtt{true}} %

\renewcommand{\leq}{\leqslant}
\renewcommand{\geq}{\geqslant}

\def\squareforqed{\hbox{\rlap{$\sqcap$}$\sqcup$}}
\def\qed{\ifmmode\squareforqed\else{\unskip\nobreak\hfil
\penalty50\hskip1em\null\nobreak\hfil\squareforqed
\parfillskip=0pt\finalhyphendemerits=0\endgraf}\fi}

\newcommand\game{{\sf G}}
\newcommand\nfgame{{\sf N}}

\newcommand\sinit{{\bar{s}}}

\newcommand\dist{{\mathit{Dist}}}

\newcommand\Prob{{\mathit{Prob}}}

\newcommand\val{{\mathit{val}}}

\newcommand{\last}{\mathit{last}}
\newcommand{\ipaths}{\mathit{IPaths}}
\newcommand{\fpaths}{\mathit{FPaths}}

\def\AP{{\mathit{AP}}}
\def\sat{{\,\models\,}}

\def\Sat{{\mathit{Sat}}}
\newcommand{\coalition}[1]{\langle \! \langle {#1} \rangle \! \rangle}
\def\notsat{{\,\not\models\,}}

\def\next{{X\,}}
\def\until{{\ {\cal U}\ }}
\def\buntil{{\ {\cal U}^{\leq k}\ }}

\def\next{{\mathtt X\,}}
\def\until{{\ \mathtt{U}\ }}
\def\untilop{{\mathtt{U}}}

\def\buntil{{\ \mathtt{U}^{\leq k}\ }}

\def\future{{\mathtt{F}\ }}
\def\futureop{{\mathtt{F}}}

\newcommand{\scumul}[1]{\mathtt{C}^{#1}} %

\newcommand{\ap}{\mathsf{a}}
\newcommand{\sinstant}[1]{\mathtt{I}^{#1}} %

\newcommand\V{{\mathtt V}}
\newcommand\probopP{{\mathtt P}}
\newcommand\nashop[3]{\coalition{#1}_{#2}(#3)}

\newcommand\probop[2]{\probopP_{#1}[\,{#2}\,]}

\newcommand\rewopR{{\mathtt R}}
\newcommand\rewop[3]{\rewopR^{#1}_{#2}[\,{#3}\,]}

\newcommand{\lab}{{\mathit{L}}}

\newcommand{\rew}{\mathit{rew}}

\begin{document}

\maketitle
\begin{abstract}
Concurrent stochastic games (CSGs) are an ideal formalism for modelling probabilistic systems that feature
multiple players or components with distinct objectives
making concurrent, rational decisions.
Examples include communication or security protocols and multi-robot navigation.
Verification methods for CSGs exist but are limited to scenarios where
agents or players are grouped into two \emph{coalitions},
with those in the same coalition sharing an identical objective.
In this paper, we propose \emph{multi-coalitional} verification techniques for CSGs.
We use subgame-perfect social welfare (or social cost) optimal Nash equilibria,
which are strategies where there is no incentive for any coalition
to unilaterally change its strategy in any game state,
and where the total combined objectives are maximised (or minimised).
We present an extension of the temporal logic rPATL
(probabilistic alternating-time temporal logic with rewards)
to specify equilibria-based properties for any number of distinct coalitions,
and a corresponding model checking algorithm
for a variant of stopping games.
We implement our techniques in the PRISM-games tool and apply them to several case studies, including a secret sharing protocol and a public good game.

\end{abstract}

\section{Introduction}\label{intro-sect}

Stochastic multi-player games are a modelling formalism
that involves a number of players making sequences of rational decisions,
each of which results in a probabilistic change in state. 
They are well suited to modelling systems that feature
competitive or collaborative behaviour between multiple components or agents,
operating in uncertain or stochastic environments.
Examples include communication or security protocols,
which may employ randomisation or send messages over unreliable channels,
and multi-robot or multi-vehicle navigation,
where sensors and actuators are subject to noise or prone to failure.
A game-theoretic approach to modelling also allows rewards,
incentives or resource usage to be incorporated.
For example, mechanism design can be used to create protocols
reliant on incentive schemes to improve robustness against
selfish behaviour by participants, as
utilised in network routing protocols~\cite{RT02} and auctions~\cite{CSS07}.

Designing reliable systems that comprise multiple components
with differing objectives is a challenge.
This is further complicated by the need to consider stochastic behaviour.
Formal verification techniques for stochastic multi-player games %
can be a valuable tool for tackling this problem.
The probabilistic model checker PRISM-games~\cite{KNPS20}
has been developed for modelling and analysis of stochastic games:
both the turn-based variant, where one player makes a decision in
each state, and the concurrent variant, where players make decisions
concurrently and without knowledge of each other's actions.
PRISM-games also supports strategy synthesis, which allows automated generation of
strategies for one or more players in the game, 
which are guaranteed to satisfy quantitative correctness specifications
written in temporal logic.

The temporal logics used in PRISM-games for stochastic games
are based on rPATL (probabilistic alternating-time temporal logic with rewards)~\cite{CFK+13b},
which combines features of the game logic ATL~\cite{AHK02}
and probabilistic temporal logics.
For example, in a 3-player game, the formula
$\coalition{\mathtt{rbt_1}}\probop{\geq p}{\future \mathsf{g}_1}$ states
``robot 1 has a strategy under which the probability of it successfully
reaching its goal is at least $p$, regardless of the strategies of robots 2 and 3''.
Model checking and strategy synthesis algorithms for rPATL exist
for both turn-based~\cite{CFK+13b} and concurrent stochastic games~\cite{KNPS18}. 

rPATL uses ATL's coalition operator $\coalition{\cdot}$ to formulate properties.
In the above example, there are two coalitions, one containing robot 1
and the other robots 2 and 3. %
The coalitions have distinctly opposing (zero-sum) objectives,
aiming either to maximise or minimise the probability of robot 1 reaching its goal.
A recent extension~\cite{KNPS19} allows the two coalitions to
have distinct objectives, using Nash equilibria.
More precisely, it uses subgame-perfect social welfare
optimal Nash equilibria, which are strategies for all players
where there is no incentive for either coalition to unilaterally change
its strategy in any state, %
and where the total combined objectives are maximised.
For example,
$\nashop{\mathtt{rbt_1}{:}\mathtt{rbt_2},\mathtt{rbt_3}}{\max=?}{\probop{}{\future \mathsf{g}_1}\ {+}\ \probop{}{\future (\mathsf{g}_2\wedge\mathsf{g}_3)}}$
asks for such an equilibrium, where the two coalitions' objectives are to maximise the probability of reaching their own (distinct) goals. Model checking rPATL for both the zero-sum~\cite{CFK+13b,KNPS18} and
equilibria-based~\cite{KNPS19} properties
has the advantage that it essentially reduces to the analysis of
2-player stochastic games, for which various algorithms exist
(e.g.,~\cite{AHK07,AM04,CAH13}).
However, a clear limitation %
is the assumption that agents can, or would be
willing to, collaborate and form two distinct coalitions.

In this paper, we propose \emph{multi-coalitional} verification techniques for concurrent stochastic games (CSGs).
We extend the temporal logic rPATL to allow reasoning about any number of distinct coalitions with different quantitative objectives,
expressed using a variety of temporal operators
capturing either the probability of an event occurring or a reward measure. 
We then give a model checking algorithm for the logic against CSGs,
restricting our attention to a variant of stopping games~\cite{CFKSW13},
which, with probability 1, eventually reach a point where
the outcome of each player's objective does not change by continuing.
Our algorithm uses a combination of backward induction (for finite-horizon operators)
and value iteration (for infinite-horizon operators). 
A key ingredient of the computation is finding optimal Nash equilibria for $n$-player games, which we perform using support enumeration \cite{PNS04} and a mixture of SMT and non-linear optimisation solvers.
We implement our techniques in the PRISM-games tool and apply them to several case studies, including a secret sharing protocol and a public good game.
This allows us to verify multi-player scenarios that could not
be analysed with existing techniques~\cite{KNPS19}.

\startpara{Related work}
As summarised above, there are various algorithms to solve CSGs,
e.g.,~\cite{AHK07,AM04,CAH13}, and model checking techniques have been developed for
both zero-sum~\cite{KNPS18} and equilibria-based~\cite{KNPS19} versions of rPATL
on CSGs, implemented in PRISM-games~\cite{KNPS20}.
However, all of this work assumes or reduces to the 2-player case.
Equilibria for $n$-player CSGs are considered in~\cite{BBG+19},
but only complexity results, not algorithms, are presented.
Other tools exist to reason about equilibria, including 
PRALINE~\cite{BRE13}, EAGLE~\cite{TGW15}, EVE~\cite{GNP+18},
MCMAS-SLK~\cite{CLM+14} (via strategy logic) and Gambit~\cite{GAMB},
but these are all for \emph{non-stochastic} games.

\section{Preliminaries}\label{prelim-sect}

We let $\dist(X)$ denote the set of probability distributions over set $X$. For any vector $v$ we use $v(i)$ to denote the $i$th entry of the vector.
\begin{definition}[Normal form game] 
A (finite, $n$-person) \emph{normal form game} (NFG) is a tuple $\nfgame = (N,A,u)$ where: $N=\{1,\dots,n\}$ is a finite set of players; $A = A_1 {\times} \cdots {\times} A_n$ and $A_i$ is a finite set of actions available to player $i \in N$; $u {=} (u_1,\dots,u_n)$ and $u_i : A \rightarrow \Rset$ is a utility function for player $i \in N$.
\end{definition}
For an NFG $\nfgame$, the players choose actions at the same time, where the choice for player $i \in N$ is over the action set $A_i$. When each player $i$ chooses $a_i$, the utility received by player $j$ equals $u_j(a_1,\dots,a_n)$. A (mixed) strategy $\sigma_i$ for player $i$ is a distribution over its action set. Let $\eta_{a_i}$ denote the pure strategy that selects action $a_i$ with probability 1 and $\Sigma^i_\nfgame$ the set of strategies for player $i$. 
A \emph{strategy profile} $\sigma{=} (\sigma_1,\dots,\sigma_n)$ is a tuple of strategies for each player and under $\sigma$ the expected utility of player $i$ equals:
\[ 
\begin{array}{c}
u_i(\sigma) \ \rmdef \ \sum_{(a_1,\dots,a_n) \in A} u_i(a_1,\dots,a_n) \cdot \left( \prod_{j=1}^n \sigma_j(a_j) \right) \, .
\end{array} 
\]
For strategy $\sigma_i$ of a player, the \emph{support} of $\sigma_i$ is the set of actions it chooses with nonzero probability, i.e., $\{ a_i \in A_i \mid \sigma_i(a_i){>}0 \}$. Furthermore, the support of a profile is the product of the supports of the individual strategies and a profile is said to have full support if it includes all available action tuples.

We now fix an NFG $\nfgame{=} (N,A,u)$ and introduce the notion of Nash equilibrium and the variants we require. For profile $\sigma{=}(\sigma_1,\dots,\sigma_n)$ and player $i$ strategy $\sigma_i'$, we define the sequence $\sigma_{-i} \rmdef (\sigma_1,\dots,\sigma_{i-1},\sigma_{i+1},\dots,\sigma_n)$ and profile $\sigma_{-i}[\sigma_i'] \rmdef (\sigma_1,\dots,\sigma_{i-1},\sigma_i',\sigma_{i+1},\dots,\sigma_n)$.
\begin{definition}[Best and least response] 
For player $i$ and strategy sequence $\sigma_{-i}$, 
a \emph{best response} for player $i$ to $\sigma_{-i}$ is a strategy $\sigma^\star_i$ for player $i$ such that $u_i(\sigma_{-i}[\sigma^\star_i]) \geq u_i(\sigma_{-i}[\sigma_i])$ for all $\sigma_i \in \Sigma^i_\nfgame$ and
a \emph{least response} for player $i$ to $\sigma_{-i}$ is a strategy $\sigma^\star_i$ for player $i$ such that $u_i(\sigma_{-i}[\sigma^\star_i]) \leq u_i(\sigma_{-i}[\sigma_i])$ for all $\sigma_i \in \Sigma^i_\nfgame$.
\end{definition}
\begin{definition}[Nash equilibrium]\label{nash-def}
A strategy profile $\sigma^\star$ is a \emph{Nash equilibrium} (NE) if $\sigma_i^\star$ is a best response to $\sigma_{-i}^\star$ for all $i \in N$.
\end{definition}
\begin{definition}[Social welfare NE]\label{swne-def}
An NE $\sigma^\star$ is a \emph{social welfare optimal NE} (SWNE) and $\langle u_i(\sigma^\star) \rangle_{i \in N}$ are \emph{SWNE values} if  $u_1(\sigma^\star){+}\cdots$ ${+}u_n(\sigma^\star)\geq u_1(\sigma){+} \cdots+u_n(\sigma)$ for all NE $\sigma$ of $\nfgame$.
\end{definition}
\begin{definition}[Social cost NE]\label{scne-def}
A profile $\sigma^\star$ of $\nfgame$ is a \emph{social cost optimal NE} (SCNE) and $\langle u_i(\sigma^\star) \rangle_{i \in N}$ are \emph{SCNE values} if $\sigma^\star$ is an NE of $\nfgame^{-}= (N,A,{-}u)$ and $u_1(\sigma^\star){+}\cdots{+}u_n(\sigma^\star)\leq u_1(\sigma){+} \cdots{+}u_n(\sigma)$ for all NE $\sigma$ of $\nfgame^{-}$. Furthermore, $\sigma^\star$ is an SWNE of $\nfgame^{-}$ if and only if $\sigma^\star$ is an SCNE of $\nfgame$.
\end{definition}
The notion of SWNE is standard~\cite{NRTV07} and applies when utility values represent profits or rewards.
We use the dual notion of SCNE for utilities that represent losses or costs. Example objectives in this category include
minimising the probability of a fault  %
or the expected time to complete a task.
We have chosen to represent SCNE directly since this is more natural than the alternative of simply negating utilities, particularly in the case of probabilities.

\begin{examp}\label{nfg-eg}
Consider the NFG representing a variant of a \emph{public good} game~\cite{HHCN19}, in which three players each receive a fixed amount of capital ($10$\euro) and can choose to invest none, half or all of it in a common stock (represented by the actions $\mathit{in}^0_i$, $\mathit{in}^5_i$ and $\mathit{in}^{10}_i$ respectively). The total invested by the players is multiplied by a factor $f$ and distributed equally among the players, and the aim of the players is to maximise their profit. The utility function of player $i$ is therefore given by:
\[
u_i(\mathit{in}^{k_1}_1,\mathit{in}^{k_2}_2,\mathit{in}^{k_3}_3) =  (f/3) {\cdot} (k_1 {+} k_2 {+} k_3) - k_i \, .
\]
for $k_i \in \{0,5,10\}$ and $1{\leq} i {\leq} 3$.
If $f{=}2$, then the profile where each investor chooses not to invest is an NE and each player's utility equals $0$. More precisely, if a single player was to deviate from this profile by investing half or all of their capital, then their utility would decrease to $(2/3) {\cdot} 5 {-} 5 = - 5/3$ or $(2/3) {\cdot} 10 {-} 10 = -10/3$, respectively. Since this is the only NE it is also the only SWNE and $(0,0,0)$ are the only SWNE values. The profile where each player invests all of their capital is not an NE as, under this profile, a player's utility equals $(2/3) {\cdot} 30 {-} 10 = 10$ and any player can increase their utility to $(2/3) {\cdot} 25 {-} 5 = 35/3$ by deviating and investing half of their capital.

On the other hand, if $f{=}3$, then there are two NE: when all players invest either none or invest all of their capital. The sum of utilities of the players under these profiles are $0{+}0{+}0=0$ and $20{+}20{+}20=60$ respectively, and therefore the second profile is the only SWNE.
\end{examp}

\begin{definition}[Concurrent stochastic game] A \emph{concurrent stochastic multi-player game} (CSG) is a tuple
$\game = (N, S, \bar{S}, A, \Delta, \delta, \AP, \lab)$ where:
\begin{itemize}
\item $N=\{1,\dots,n\}$ is a finite set of players;
\item $S$ is a finite set of states and $\bar{S} \subseteq S$ is a set of initial states;
\item $A = (A_1\cup\{\bot\}) {\times} \cdots {\times} (A_n\cup\{\bot\})$ where $A_i$ is a finite set of actions available to player $i \in N$ and $\bot$ is an idle action disjoint from the set $\cup_{i=1}^n A_i$;
\item $\Delta \colon S \rightarrow 2^{\cup_{i=1}^n A_i}$ is an action assignment function;
\item $\delta \colon S {\times} A \rightarrow \dist(S)$ is a (partial) probabilistic transition function;
\item $\AP$ is a set of atomic propositions and $\lab \colon S \rightarrow 2^{\AP}$ is a labelling function.
\end{itemize}
\end{definition}
A CSG $\game$ starts in an initial state $\sinit \in \bar{S}$ and, when in state $s$, each player $i \in N$ selects an action from its available actions $A_i(s) \rmdef \Delta(s) \cap A_i$ if this set is non-empty, and from $\{ \bot \}$ otherwise. For any state $s$ and action tuple $a=(a_1,\dots,a_n)$, the partial probabilistic  transition function $\delta$ is defined for $(s,a)$ if and only if $a_i \in A_i(s)$ for all $i \in N$. We augment CSGs with \emph{reward structures}, which are tuples of the form $r{=}(r_A,r_S)$ where $r_A : S {\times} A \ra \Rset$ and $r_S : S \ra \Rset$ are action and state reward functions, respectively.

A \emph{path} is a sequence $\pi = s_0 \xrightarrow{\alpha_0} s_1 \xrightarrow{\alpha_1} \cdots$ such that $s_i \in S$, $\alpha_i = (a^i_1,\dots,a^i_n) \in A$, $a^i_j \in A_j(s_i)$ for $j \in N$ and $\delta(s_i,\alpha_i)(s_{i+1}){>}0$ for all $i {\geq} 0$. Given a path $\pi$, we denote by $\pi(i)$ the $(i{+}1)$th state, $\pi[i]$ the $(i{+}1)$th action, and if $\pi$ is finite, $\last(\pi)$ the final state. %
The sets of finite and infinite paths (starting in state $s$) of $\game$ are given by $\fpaths_\game$ and $\ipaths_\game$ ($\fpaths_{\game,s}$ and $\ipaths_{\game,s}$).

\emph{Strategies} are used to resolve the choices of the players. Formally, a strategy for player $i$ is a function $\sigma_i \colon \fpaths_{\game} \ra \dist(A_i \cup \{ \bot \})$ such that, if $\sigma_i(\pi)(a_i){>}0$, then $a_i \in A_i(\last(\pi))$. A \emph{strategy profile} is a tuple $\sigma {=} (\sigma_1,\dots,\sigma_{n})$ of strategies for all players. The set of strategies for player $i$ and set of profiles are denoted $\Sigma^i_\game$ and $\Sigma_\game$. Given a profile $\sigma$ and state $s$, let $\ipaths^\sigma_{\game,s}$ denote the infinite paths with initial state $s$ corresponding to $\sigma$.
We can then define, using standard techniques~\cite{KSK76}, a probability measure $\Prob^{\sigma}_{\game,s}$ over $\ipaths^{\sigma}_{\game,s}$ and, for a random variable $X \colon \ipaths_{\game} \rightarrow \Rset$, the expected value $\Eset^{\sigma}_{\game,s}(X)$ of $X$ in $s$ under $\sigma$.

In a CSG, a player's utility or \emph{objective} is represented by a random variable $X_i \colon \ipaths_{\game} \rightarrow \Rset$. Such variables can encode, for example, the probability of reaching a target or the expected cumulative reward before reaching a target. Given an objective for each player, social welfare and social cost NE can be defined as for NFGs. As in \cite{KNPS19}, we consider \emph{subgame-perfect} NE~\cite{OR04}, which are NE in \emph{every state} of the CSG.
In addition, for infinite-horizon objectives, the existence of NE is an open problem~\cite{BMS14} so, for such objectives, we use $\varepsilon$-NE,
which exist for any $\varepsilon{>}0$. Formally, we have the following definition.
\begin{definition}[Subgame-perfect $\varepsilon$-NE] For CSG $\game$ and $\varepsilon{>}0$, a profile $\sigma^\star$ is a \emph{subgame-perfect $\varepsilon$-NE}
for the objectives $\langle X_i \rangle_{i \in N}$ if and only if:\/ $\Eset^{\sigma^\star}_{\game,s}(X_i) \geq \sup_{\sigma_i \in \Sigma_i} \Eset^{\sigma^\star_{-i}[\sigma_i]}_{\game,s}(X_i) - \varepsilon$ for all $i \in N$ and $s \in S$.
\end{definition}

\begin{examp}\label{csg-eg}
We now extend \egref{nfg-eg} to allow the players to invest their capital (and subsequent profits) over a number of months and assume that, at the end of each month, the parameter $f$ can either increase or decrease by $0.2$ with probability $0.1$. %
This can be modelled as a CSG $\game$ whose states are tuples of the form $(m,f,c_1,c_2,c_3)$, where $m$ is the current month, $f$ the parameter value and $c_i$ is the current capital of player $i$ (the initial capital plus or minus any profits or losses made in previous months). If $f$ has initial value $2$ and the players start with a capital of 10\euro, then the initial state of $\game$ equals $(0,2,10,10,10)$. The actions of player $i$ are of the form $\mathit{in}^{k_i}_i$, which corresponds to $i$ investing $k_i$ in the current month. The probabilistic transition function of the game is such that:
\begin{eqnarray*}
\lefteqn{\hspace*{-2cm}\delta((m,f,c_1,c_2,c_3),(\mathit{in}^{k_1}_1,\mathit{in}^{k_2}_2,\mathit{in}^{k_3}_3))(m',f',c_1',c_2',c_3') } \\ &=&
\left\{ \begin{array}{cl}
0.8 & \ \mbox{if $m'{=}m{+}1$, $f'{=}f$ and $c_i'{=} c_i {+} p_i$} \\
0.1 & \ \mbox{if $m'{=}m{+}1$, $f'{=}f{+}0.2$ and $c_i'{=} c_i {+} p_i$} \\
0.1 & \ \mbox{if $m'{=}m{+}1$, $f'{=}f{-}0.2$ and $c_i'{=} c_i {+} p_i$} \\
0 & \ \mbox{otherwise}
\end{array} \right.
\end{eqnarray*}
where $p_i =  (f/3) {\cdot} (k_1 {+} k_2 {+} k_3) - k_i$ for $k_i \in \{0,5,10\}$ and $1{\leq} i {\leq} 3$.

If we are interested in the profits of the players after $k$ months, then we can consider a random variable for player $i$ which would return, for a path with $(k{+}1)$th state $(k,f,c_1,c_2,c_3)$, the value $c_i{-}10$.
\end{examp}

\section{Extended rPATL with Nash Formulae}\label{logic-sec}

We now consider the logic rPATL with Nash formulae~\cite{KNPS19} and enhance it with equilibria-based properties that can separate players into more than two coalitions.
\begin{definition}[Extended rPATL syntax]
The syntax of our extended version of {\rm rPATL} is given by the grammar:
\begin{eqnarray*}
\phi & \; \coloneqq \; & \mathtt{true} \mid \ap \mid \neg \phi \mid \phi \wedge \phi \mid \coalition{C}\probop{\sim q}{\psi} \mid \coalition{C}\rewop{r}{\sim x}{\rho} \mid \nashop{C_1{:}\cdots{:}C_m}{\opt \sim x}{\theta} \\
\theta & \; \coloneqq \; & \probop{}{\psi}{+}{\cdots}{+}\probop{}{\psi} \ \mid \  \rewop{r}{}{\rho}{+}{\cdots}{+}\rewop{r}{}{\rho}  \\
\psi & \; \coloneqq \; & \next \phi \ \mid \ \phi \buntil \phi \ \mid \ \phi \until \phi \\
\rho & \; \coloneqq \; &  \sinstant{=k} \ \mid \ \scumul{\leq k} \ \mid \ \future \phi
\end{eqnarray*}
where $\ap$ is an atomic proposition, $C$ and $C_1,\dots,C_m$ are coalitions of players such that $C_i \cap C_j = \emptyset$ for all $1\leq i \neq j \leq m$ and $\cup_{i=1}^m C_i = N$, $\opt \in \{ \min,\max\}$, $\sim \,\in \{<, \leq, \geq, >\}$, $q \in \Qset\cap[0, 1]$, $x \in \Qset$, $r$ is a reward structure and $k \in \Nset$. %
\end{definition} %
Our addition to the logic is
\emph{Nash formulae} of the form $\nashop{C_1{:}{\cdots}{:}C_m}{\opt \sim x}{\theta}$, where the \emph{nonzero sum} formulae $\theta$ comprises a sum of $m$ probability or reward objectives
(for full details of the rest of the logic see~\cite{KNPS18,KNPS19}).
The formula $\nashop{C_1{:}{\cdots}{:}C_m}{\max \sim x}{\probop{}{\psi_1}{+}{\cdots}{+}\probop{}{\psi_m}}$ holds in a state
if, when the players form the coalitions $C_1,\dots,C_m$,
there is a subgame-perfect SWNE for which the \emph{sum} of the values of the objectives
$\probop{}{\psi_1},\dots,\probop{}{\psi_m}$ for the coalitions $C_1,\dots,C_m$
satisfies ${\sim} x$.
The case for reward objectives is similar and, for formulae of the form $\nashop{C_1{:}{\cdots}{:}C_m}{\min \sim x}{\theta}$,
we require the existence of an SCNE rather than an SWNE.
We also allow \emph{numerical} queries of the form $\nashop{C_1{:}{\cdots}{:}C_m}{\opt =?}{\theta}$, which return the sum of the SWNE or SCNE values.

In a probabilistic nonzero-sum formula $\theta{=}\probop{}{\psi_1}{+}{\cdots}{+}\probop{}{\psi_m}$,
each objective $\psi_i$ can be a next ($\next \phi$), bounded until ($\phi_1 \buntil  \phi_2$) or until ($\phi_1 \until \phi_2$) formula, with the usual equivalences, e.g., $\future\phi \equiv \true \until \phi$. For the reward case $\theta{=}\rewop{r_1}{}{\rho_1}{+}{\cdots}{+}\rewop{r_m}{}{\rho_m}$,
each $\rho_i$ refers to a reward formula with respect to reward structure $r_i$ and can be bounded instantaneous reward ($\sinstant{=k}$), bounded accumulated reward ($\scumul{\leq k}$) or reachability reward ($\future \phi$).
\begin{examp}
Recall the public good CSG from \egref{csg-eg}. Examples of nonzero-sum formulae in our logic include:
\begin{itemize}
\item
$\nashop{\mathit{p}_1{:}\mathit{p}_2{:}\mathit{p}_3}{\max \geq 3}{\probop{}{ \future c_1{\geq}20}{+}\probop{}{\future c_2{\geq}20}{+}\probop{}{\future c_3{\geq}20}}$ states that the three players can collaborate such that they each eventually double their capital with probability 1;
\item
$\nashop{\mathit{p}_1{:}\mathit{p}_2{:}\mathit{p}_3}{\max=?}{\rewop{\mathit{cap}_1}{}{\sinstant{=4}}{+}\rewop{\mathit{cap}_2}{}{\sinstant{=4}}{+}\rewop{\mathit{cap}_3}{}{\sinstant{=4}}}$ asks for the sum of the expected capital of the players at $4$ months when they collaborate, where the state reward function of $\mathit{cap}_i$ returns the capital of player $i$.
\item
$\nashop{\mathit{p}_1{:}\mathit{p}_2{:}\mathit{p}_3}{\max\geq 50}{\rewop{\mathit{pro}_1}{}{\scumul{\leq 6}}{+}\rewop{\mathit{pro}_2}{}{\scumul{\leq 6}}{+}\rewop{\mathit{pro}_3}{}{\scumul{\leq 6}}}$ states that the sum of the expected cumulative profit of the players after 6 months when they collaborate is at least 50, where the action reward function of $\mathit{pro}_i$ returns the expected profit of player $i$ from a state for the given action tuple.
\end{itemize}
\end{examp}
\noindent
In order to give the semantics of the logic,
we require an extension of the notion of \emph{coalition games}~\cite{KNPS18} which, given a CSG $\game$ and partition $\cC$ of the players into $m$ coalitions, reduces $\game$ to an $m$-player coalition game, where each player corresponds to one of the coalitions in $\cC$. Without loss of generality, we assume $\cC$ is of the form $\{ \{1,\dots,n_1\}, \{ n_1{+}1,\dots n_2 \}, \dots, \{ n_{m-1}{+}1, \dots n_{m} \}\}$ and let $j_\cC$ denote player $j$'s position in its coalition. 
\begin{definition}[Coalition game] For CSG $\game{=}(N, S, \bar{s}, A, \Delta, \delta, \AP, \lab)$ and partition of the players into $m$ coalitions $\cC=\{C_1, \dots, C_{m}\}$, we define the \emph{coalition game} $\game^\cC {=} ( M, S, \bar{s}, A^\cC, \Delta^\cC, \delta^\cC, \AP, \lab)$ as an $m$-player CSG where:
\begin{itemize}
    \item $M = \{1,\dots,m \}$;
	\item $A^\cC = (A^\cC_1\cup \{ \bot\}) {\times} \cdots {\times} (A^\cC_{m}\cup \{ \bot\})$;
	\item $A^\cC_i = (\prod_{j \in C_i} (A_j\cup\{\bot\}) \setminus \{(\bot,\dots,\bot)\} \big)$ for all $i \in M$;
	\item for any $s \in S$ and $i \in M\!:$ $a_i^\cC \in \Delta^\cC(s)$ if and only if either $\Delta(s) \cap A_j =\emptyset$ and $a_i^\cC(j_\cC)=\bot$ or $a_i^\cC(j_\cC) \in \Delta(s)$  for all $j \in C_i$;
	\item for any $s \in S$ and $(a^\cC_1,\dots,a^\cC_{m})\in A^\cC\!:$ $\delta^\cC(s,(a^\cC_1,\dots,a^\cC_{m})) = \delta(s,(a_1,\dots,a_n))$ where for $i \in M$ and $j \in C_i$ if $a_i^\cC {=} \bot$, then $a_j {=}\bot$ and otherwise $a_j {=} a_i^\cC(j_\cC)$.
\end{itemize}
Furthermore, for a reward structure $r=(r_A,r_S)$, by abuse of notation we use $r=(r^{\cC}_A,r^{\cC}_S)$ for the corresponding reward structure of $\game^{\cC}$ where:
\begin{itemize}
    \item for any $s \in S$, $a^C_i \in A^C_i\!:$ $r^{\cC}_{A^{\cC}}(s, (a_1^C,\dots,a_{m}^C))=r_A(s,(a_1,\dots,a_n))$ where for $i \in M$ and $j \in C_i$, if $a_i^\cC = \bot$, then $a_j{=}\bot$ and otherwise $a_j {=} a_i^\cC(j_\cC)$;
    \item for any $s \in S\!:$ $r^{\cC}_S(s) {=} r_S(s)$.
\end{itemize}
\end{definition}
The logic includes infinite-horizon objectives ($\untilop$, $\futureop$),
for which the existence of SWNE and SCNE is open~\cite{BMS14}.
However, $\varepsilon$-SWNE and $\varepsilon$-SCNE \emph{do} exist for any $\varepsilon>0$.

\begin{definition}[Extended rPATL semantics]\label{sem-def}
For a CSG $\game$, $\varepsilon>0$ and a formula $\phi$,
the satisfaction relation $\esat$ is defined inductively over the structure of~$\phi$.
The propositional logic fragment $(\mathtt{true}$, $\ap$, $\neg$, $\wedge)$
is defined in the usual way. The zero-sum formulae $\coalition{C} \probop{\sim q}{\psi}$ and $\coalition{C} \rewop{r}{\sim x}{\rho}$ are defined as in~\emph{\cite{KNPS18,KNPS19}}.
For a Nash formula and state $s \in S$ in CSG $\game$, we have:
\begin{eqnarray*}
s \esat \nashop{C_1{:}\cdots{:}C_m}{\opt \sim x}{\theta} & \Leftrightarrow &
\exists \sigma^\star \in \Sigma_{\game^{\cC}} . \, \left( \Eset^{\sigma^\star}_{\game^{\cC},s}(X^\theta_1)+ \cdots + \Eset^{\sigma^\star}_{\game^{\cC},s}(X^\theta_m) \right) \sim x
\end{eqnarray*}
and $\sigma^\star=(\sigma_1^\star,\dots,\sigma_m^\star)$ 
is a subgame perfect $\varepsilon$-SWNE if $\opt{=}\max$, and a subgame perfect $\varepsilon$-SCNE$\,$ if $\opt{=}\min$, for the objectives $(X^\theta_1,\dots,X^\theta_m)$ in $\game^{\cC}$ where $\cC = \{ C_1,\dots,C_m\}$ and
for $1{\leq}i{\leq}m$ and $\pi \in \ipaths_{\game^\cC,s}^{\sigma^\star}:$
\begin{eqnarray*}
X^{\probop{}{\psi^1}{+}\cdots{+}\probop{}{\psi^m}}_i(\pi) & \ = \ & 1 \;\mbox{if $\pi \esat \psi^i$ and 0 otherwise} \\
X^{\rewop{r_{\scale{.75}{1}}}{}{\rho^1}{+}\cdots{+}\rewop{r_{\scale{.75}{m}}}{}{\rho^m}}_i(\pi) & \ = \ &  \rew(r_i,\rho^i)(\pi) \\
\pi \esat \next \phi 
& \ \Leftrightarrow \ & 
\pi(1) \esat \phi \\
\pi \esat \phi_1 \buntil \phi_2 
& \ \Leftrightarrow \ &
\exists i \leq k . \, (\pi(i) \esat \phi_2 \wedge \forall j < i . \, \pi(j) \esat \phi_1 )
\\
\pi \esat \phi_1 \until \phi_2 
& \ \Leftrightarrow \ & 
\exists i \in \Nset . \, ( \pi(i) \esat \phi_2 \wedge \forall j < i  . \, \pi(j) \esat \phi_1 ) \\
\rew(r,\sinstant{=k})(\pi) & = & r_S(\pi(k)) \\
\rew(r,\scumul{\leq k})(\pi) & = & \mbox{$\sum_{i=0}^{k-1}$} \big( r_A(\pi(i),\pi[i])+r_S(\pi(i)) \big) \\
\rew(r,\future  \phi)(\pi) & = & \left\{ \begin{array}{cl}
\infty
& \mbox{if} \; \forall j \in \Nset . \, \pi(j) \notsat \phi \\
\mbox{$\sum_{i=0}^{k_\phi}$} \big( r_A(\pi(i),\pi[i])+r_S(\pi(i)) \big) & \mbox{otherwise}
\end{array} \right.
\end{eqnarray*}
and $k_\phi = \min \{ k{-}1 \mid \pi(k) \esat \phi \}$.
\end{definition}

\section{Model Checking CSGs against Nash Formulae}\label{mc-sect}

rPATL is a branching-time logic and so the model checking algorithm works by recursively computing the set $\Sat(\phi)$ of states satisfying formula $\phi$
over the structure of $\phi$. Therefore, to extend the existing algorithm of~\cite{KNPS18,KNPS19}, we need only consider formulae of the form $\nashop{C_1{:}{\cdots}{:}C_m}{\opt \sim x}{\theta}$. %
From \defref{sem-def}, this requires the computation of subgame-perfect SWNE or SCNE values of the objectives $(X^\theta_1,\dots,X^\theta_m)$ and a comparison of their sum to the threshold $x$. 

We first explain how we compute SWNE values in NFGs. Next we consider CSGs, and show how to compute subgame-perfect SWNE and SCNE values for finite-horizon objectives and approximate values for infinite-horizon objectives. For the remainder of this section we fix an NFG $\nfgame$ and CSG $\game$.

As in \cite{KNPS19}, to check nonzero-sum properties on CSGs,
we have to work with a restricted class of games.
This can be seen as a variant of \emph{stopping games}~\cite{CFKSW13},
as used for multi-objective turn-based stochastic games.
Compared to~\cite{CFKSW13}, we use a weaker, objective-dependent assumption,
which ensures that, under all profiles,
with probability 1, eventually the outcome of each player's objective does not change by continuing.
This can be checked using graph algorithms~\cite{dA97a}.
\begin{assumption}\label{game-assum}
For each subformula $\probop{}{\phi_1^i \until \phi_2^i}$, set $\Sat(\neg \phi_1^i \vee \phi_2^i)$ is reached with probability 1 from all states under all profiles. For each subformula $\rewop{r}{}{\future \phi^i}$, the set $\Sat(\phi^i)$ is reached with probability 1 from all states under all profiles.
\end{assumption} 

\startpara{Computing SWNE Values of NFGs}
Computing NE values for an $n$-player game is a complex task when $n{>}2$, as it can no longer be reduced to a linear programming problem. 
The algorithm for the two-player case presented in \cite{KNPS19}, based on \emph{labelled polytopes}, starts by considering all the regions of the strategy profile space and then iteratively reduces the search space as positive probability assignments are found and added as restrictions on this space. The efficiency of this approach deteriorates when analysing games with large numbers of actions and when one or more players are indifferent, as the possible assignments resulting from action permutations need to be exhausted.

Going in the opposite direction, support enumeration~\cite{PNS04} is a method for computing NE that exhaustively examines all sub-regions, i.e., supports, of the strategy profile space, one at a time, checking whether that sub-region contains equilibria.
The number of supports is exponential in the number of actions and equals $\prod_{i=1}^n (2^{|A_i|} -1)$. Therefore computing SWNE values through support enumeration will only be efficient for games with a small number of actions.  

We now show how, for a given support, using the following lemma, the computation of SWNE profiles can be encoded as a \emph{nonlinear programming problem}. The lemma states that a profile is an NE if and only if any player switching to a single action in the support of the profile yields the same utility for the player and switching to an action outside the support can only decrease its utility.
\begin{lemma}[\cite{PNS04}]\label{necond-lem} The strategy profile $\sigma{=}(\sigma_1,\dots,\sigma_n)$ of $\nfgame$ is an NE if and only if the following conditions are satisfied: 
\begin{eqnarray}
\forall i \in N . \, \forall a_i \in A_i . \, && \sigma_i(a_i)>0 \rightarrow u_i(\sigma_{-i}[\eta_{a_i}]) = u_i(\sigma) \label{necond1-eqn} \\
\forall i \in N . \, \forall a_i \in A_i . \, && \sigma_i(a_i)=0 \rightarrow u_i(\sigma_{-i}[\eta_{a_i}]) \leq u_i(\sigma) \, . \label{necond2-eqn}
\end{eqnarray}
\end{lemma}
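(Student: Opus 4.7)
The plan is to prove both directions via the linearity of the expected utility $u_i$ in player $i$'s mixed strategy, together with the standard observation that the value of a mixed strategy is a convex combination of pure-strategy deviations.

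First I would record the linearity identity: for any profile $\sigma$ and any strategy $\sigma_i'$ of player $i$,
\[
u_i(\sigma_{-i}[\sigma_i']) \;=\; \sum_{a_i \in A_i} \sigma_i'(a_i)\cdot u_i(\sigma_{-i}[\eta_{a_i}]),
\]
which follows directly from the definition of $u_i(\cdot)$ by pulling the sum over $a_i$ outside. Specialising to $\sigma_i' = \sigma_i$ shows that $u_i(\sigma)$ is a convex combination of the values $u_i(\sigma_{-i}[\eta_{a_i}])$ over the support of $\sigma_i$.

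For the forward direction, assume $\sigma$ is an NE, so $\sigma_i$ is a best response to $\sigma_{-i}$ for every $i$. Applying the best-response inequality to each pure strategy $\eta_{a_i}$ gives $u_i(\sigma_{-i}[\eta_{a_i}]) \leq u_i(\sigma)$ for every $a_i \in A_i$, which is exactly \eqnref{necond2-eqn}. For \eqnref{necond1-eqn}, I use that $u_i(\sigma)$ is a convex combination of the quantities $u_i(\sigma_{-i}[\eta_{a_i}])$ with $a_i$ in the support of $\sigma_i$, each of which is bounded above by $u_i(\sigma)$; a convex combination can equal its upper bound only if every summand with positive weight already attains that bound, so $u_i(\sigma_{-i}[\eta_{a_i}]) = u_i(\sigma)$ whenever $\sigma_i(a_i) > 0$.

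For the converse, assume both \eqnref{necond1-eqn} and \eqnref{necond2-eqn} hold and fix any player $i$ and any strategy $\sigma_i' \in \Sigma^i_\nfgame$. By the linearity identity and the two conditions (which jointly give $u_i(\sigma_{-i}[\eta_{a_i}]) \leq u_i(\sigma)$ for every $a_i \in A_i$), we obtain
\[
u_i(\sigma_{-i}[\sigma_i']) \;=\; \sum_{a_i \in A_i} \sigma_i'(a_i)\cdot u_i(\sigma_{-i}[\eta_{a_i}]) \;\leq\; \sum_{a_i \in A_i} \sigma_i'(a_i)\cdot u_i(\sigma) \;=\; u_i(\sigma),
\]
so $\sigma_i$ is a best response to $\sigma_{-i}$, and since $i$ was arbitrary, $\sigma$ is an NE. There is no real obstacle here; the only thing to be careful about is correctly translating \defref{nash-def} (best response against arbitrary mixed $\sigma_i'$) into the pure-deviation statement used in the lemma, which is exactly what the linearity identity provides.
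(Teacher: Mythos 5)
Your proof is correct and complete: the linearity identity, the convex-combination argument for the forward direction, and the pure-deviation bound for the converse are exactly the standard argument for this support characterisation of Nash equilibria. The paper itself gives no proof of \lemref{necond-lem} (it is imported directly from \cite{PNS04}), so there is nothing to diverge from; your argument is the expected one and needs no changes.
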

Given the support $B = B_1 {\times} \cdots {\times}B_n \subseteq A$, to construct the problem, we first choose \emph{pivot} actions\footnote{For each $i \in N$ this can be any action in $B_i$.} $b^p_i \in B_i$ for $i \in N$, then the problem is to minimise:
\begin{equation}\label{opt-eqn}
\begin{array}{c}
\left( \sum_{i \in N} \max_{a \in A} u_i(a) \right) - \sum_{i \in N} \left( \sum_{b \in B} u_i(b) \cdot \left( \prod_{j \in N} p_{j,b_j} \right) \right)
\end{array}
\end{equation}
subject to:
\begin{eqnarray}
\!\!\!\!\!\!\!\!\!\!\mbox{$\sum_{c \in B_{-i}(b_i^p)} u_i(c) \cdot \left( \prod_{j\in N_{-i}} p_{j,c_j} \right)
- \sum_{c \in B_{-i}(b_i)} u_i(c) \cdot \left( \prod_{j\in N_{-i}} p_{j,c_j} \right)$} & = & 0 \label{eq-eqn} \\
\!\!\!\!\!\!\!\!\!\!\mbox{$\sum_{c \in B_{-i}(b^p_i)} u_i(c) \cdot \left( \prod_{j\in N_{-i}} p_{j,c_j} \right)
- \sum_{c \in B_{-i}(a_i)} u_i(c) \cdot \left( \prod_{j\in N_{-i}} p_{j,c_j} \right)$} & \geq & 0  \label{leq-eqn} \\
\mbox{$\sum_{b_i \in B_i} p_{i,b_i}$} =  1 \quad \mbox{and} \quad p_{i,b_i} &  >  &  0 \label{ge-eqn}
\end{eqnarray}
for all $i \in N$, $b_i \in B_i {\setminus} \{b_i^p \}$ and $a_i \in A_i {\setminus} B_i$ where $B_{-i}(c_i) = B_1 {\times} \cdots {\times} B_{i-1} {\times} \{ c_i \}$ ${\times} B_{i+1} {\times} \cdots {\times} B_n$ and $N_{-i} = N {\setminus} \{i\}$.
The variables %
in the above program represent the probabilities players choose different actions, i.e.\ $p_{i,b_i}$ is the probability $i$ selects $b_i$. The constraints 
\eqnref{ge-eqn} ensure the probabilities of each player sum to one and the support of the corresponding profile equals $B$. The constraints \eqnref{eq-eqn} and \eqnref{leq-eqn} require that the solution corresponds to an NE as these encode the constraints \eqnref{necond1-eqn} and \eqnref{necond2-eqn}, respectively, of \lemref{necond-lem} when restricting to pivot actions. This restriction is sufficient as \eqnref{necond1-eqn} requires all actions in the support to yield the same utility. The first term in \eqnref{opt-eqn} corresponds to the maximum possible sum of utilities for the players, i.e.\ it sums the maximum utility of each player, and the second sums the individual utilities of the players when they play according to the profile corresponding to the solution. By minimising the difference between these two terms, we require the solution to be social welfare optimal.

SMT solvers with nonlinear modules can be used to solve such problems, although they can be inefficient. Alternative approaches %
include \emph{barrier} or \emph{interior-point} methods~\cite{NWW09}.

\begin{table}[t]
\centering
\begin{tabular}{cccc}
$a$                                 & $u_1$                  & $u_2$                  & $u_3$                  \\ \hline
\multicolumn{1}{|c|}{$(c_1, c_2, c_3)$} & \multicolumn{1}{c|}{7} & \multicolumn{1}{c|}{7} & \multicolumn{1}{c|}{7} \\ \hline
\multicolumn{1}{|c|}{$(c_1, c_2, d_3)$} & \multicolumn{1}{c|}{3} & \multicolumn{1}{c|}{3} & \multicolumn{1}{c|}{9} \\ \hline
\end{tabular}
\begin{tabular}{cccc}
$a$                                 & $u_1$                  & $u_2$                  & $u_3$                  \\ \hline
\multicolumn{1}{|c|}{$(c_1, d_2, c_3)$} & \multicolumn{1}{c|}{3} & \multicolumn{1}{c|}{9} & \multicolumn{1}{c|}{3} \\ \hline
\multicolumn{1}{|c|}{$(c_1, d_2, d_3)$} & \multicolumn{1}{c|}{0} & \multicolumn{1}{c|}{5} & \multicolumn{1}{c|}{5} \\ \hline
\end{tabular}
\begin{tabular}{cccc}
$a$                                 & $u_1$                  & $u_2$                  & $u_3$                  \\ \hline
\multicolumn{1}{|c|}{$(d_1, c_2, c_3)$} & \multicolumn{1}{c|}{9} & \multicolumn{1}{c|}{3} & \multicolumn{1}{c|}{3} \\ \hline
\multicolumn{1}{|c|}{$(d_1, c_2, d_3)$} & \multicolumn{1}{c|}{5} & \multicolumn{1}{c|}{0} & \multicolumn{1}{c|}{5} \\ \hline
\end{tabular}
\begin{tabular}{cccc}
$a$                                 & $u_1$                  & $u_2$                  & $u_3$                  \\ \hline
\multicolumn{1}{|c|}{$(d_1, d_2, c_3)$} & \multicolumn{1}{c|}{5} & \multicolumn{1}{c|}{5} & \multicolumn{1}{c|}{0} \\ \hline
\multicolumn{1}{|c|}{$(d_1, d_2, d_3)$} & \multicolumn{1}{c|}{1} & \multicolumn{1}{c|}{1} & \multicolumn{1}{c|}{1} \\ \hline
\end{tabular}
\vspace{0.2cm}
\caption{Utilities for an instance of a three-player prisoner's dilemma.}
\label{tab:3pd}
\vspace{-.75cm}
\end{table}

\begin{examp} Consider the instance of three prisoner's dilemma with utilities described in Table~\ref{tab:3pd} where $A_i {=}\{c_i,d_i\}$ for $1{\leq}i{\leq}3$. 
For the full support $B^\mathit{fs}$
the utility of player $i$ equals:
\[
u_i(B^\mathit{fs}) = p_{i,c_i} {\cdot} u_i(B_{-i}^\mathit{fs}(c_i)) + p_{i,d_i} {\cdot} u_i(B_{-i}^\mathit{fs}(d_i)) 
\]
where $u_i(B_{-i}^\mathit{fs}(c_i))$ and $u_i(B_{-i}^\mathit{fs}(d_i))$ are the utilities of player $i$ when switching to choosing action $c_i$ and $d_i$ with probability 1 and are given by: 
\[
\begin{array}{rcl}
u_i(B_{-i}^\mathit{fs}(c_i)) & = & 7{\cdot}p_{j,c_j}{\cdot}p_{k,c_k} +
3{\cdot}p_{j,c_j}{\cdot}p_{k,d_k} +
3{\cdot}p_{j,d_j}{\cdot}p_{k,c_k}\\
u_i(B_{-i}^\mathit{fs}(d_i)) & = & 9{\cdot}p_{j,c_j}{\cdot}p_{k,c_k} +
5{\cdot}p_{j,c_j}{\cdot}p_{k,d_k} +
5{\cdot}p_{j,d_j}{\cdot}p_{k,c_k} +
p_{j,d_j}{\cdot}p_{k,d_k} 
\end{array} 
\]
for $1 {\leq} i {\neq} j {\neq} k {\leq} 3$. Now, choosing $c_i$ as the pivot action for $1 {\leq} i {\leq} 3$, we obtain the nonlinear program of minimising:
\[
27 - (u_1({B^\mathit{fs}}) +  u_2({B^\mathit{fs}}) +  u_3({B^\mathit{fs}})) 
\]
subject to: $u_i(B^\mathit{fs}_{-i}(c_i)) {-} u_i(B^\mathit{fs}_{-1}(d_i)) = 0$, $p_{i,c_i} {+} p_{i,d_i} = 1$, $p_{i,c_i} {>} 0$ and $p_{i,d_i} {>} 0$ for $1 {\leq}i{\leq}3$.
When trying to solving this problem, we find that there is no NE as the constraints reduce to $p_{3,c_3}{\cdot} (p_{2,d_2}+1)=-1$, which cannot be satisfied. 

For the partial support $B^\mathit{ps} {=} \{ (d_1,d_2,d_3)\}$,  $d_i$ is the only choice of pivot action for player $i$ and, after a reduction, we obtain the program of minimising:
\[
27 - (p_{2,d_2}p_{3,d_3} + p_{1,d_1}p_{3,d_3} +  p_{1,d_1}p_{2,d_2})
\]
subject to: $p_{i,d_i}{\cdot}p_{j,d_j} \geq 0$, $p_{i,d_i} {=} 1$ and $p_{i,d_i} {>} 0$ for $1 {\leq} i {\neq} j {\leq} 3$. Solving this problem we see it is satisfied, and therefore the profile where each player $i$ chooses $d_i$ is an NE. This demonstrates that, as for the two-player prisoner's dilemma, defection dominates cooperation for all players, which leads to the only NE.
\end{examp}

\startpara{Computing Values of Nash Formulae}
We now show how to compute the SWNE values of a Nash formula  $\nashop{C_1{:}{\cdots}:C_m}{\opt\sim x}{\theta}$. The case for SCNE values can be computed similarly, where to compute SCNE values of a NFG $\nfgame$, we use \defref{scne-def}, negate the utilities of $\nfgame$, find SWNE values of the resulting NFG and return the negation of these values as SCNE values of $\nfgame$. %

If all the objectives in the nonzero sum formula $\theta$ are finite-horizon, \emph{backward induction}~\cite{SW+01,NMK+44} can be applied to compute (precise) subgame-perfect SWNE values. On the other hand, when all the objectives are infinite-horizon,
we extend the techniques of \cite{KNPS19} for two coalitions and use \emph{value iteration}~\cite{CH08} to approximate subgame-perfect SWNE values.
In cases when there is a combination of finite- and infinite-horizon objectives, we can extend the techniques of \cite{KNPS19} and make all objectives infinite-horizon by modifying the game in a standard manner.

{Value computation for each type of objective is described below.
The extension of the two-player case \cite{KNPS19} is non-trivial:
in that case, when one player reaches their goal, we can apply MDP verification techniques by making the players form a coalition to reach the remaining goal. However, in the $n$-player case, if one player reaches their goal we cannot reduce the analysis to an $(n{-}1)$-player game, as the choices of the player that has reached its goal can still influence the outcomes of the remaining players,  and making the player form a coalition with one of the other players will give the other player an advantage.
Instead, we need to keep track of the set of players that have reached their goal (denoted $D$) and can no longer reach their goal in the case of until formulae (denoted $E$), and define the values at each iteration using these sets.

We use the notation $\V_{\game^\cC}(s,\theta)$  ($\V_{\game^\cC}(s,\theta,n)$)  for the vector of computed values of the objectives $(X^\theta_1,X^\theta_2,\dots,X^\theta_m)$ in state $s$ of $\game^\cC$ (at iteration $n$). We also use $\mathbf{1}_m$ and $\mathbf{0}_m$ to denote a vector of size $m$ whose entries all equal to 1 or 0, respectively. For any set of states $S'$ and state $s$ we let $\eta_{S'}(s)$ equal $1$ if $s \in S'$ and $0$ otherwise. Furthermore, to simplify the presentation the step bounds appearing in path and reward formulae can take negative values. %
\startparai{Bounded Probabilistic Until}
If $\theta = \probop{}{\phi^1_1 \ \untilop^{\leq k_1} \ \phi^1_2}{+}\cdots{+}\probop{}{\phi^{m}_1 \ \untilop^{\leq k_m} \ \phi^{m}_2}$, we compute SWNE values of the objectives for the nonzero-sum formulae $\theta_n = \probop{}{\phi^1_1 \ \untilop^{\leq k_1-n}\ \phi^1_2}{+}\cdots{+}\probop{}{\phi^{m}_1 \ \untilop^{\leq k_m-n} \ \phi^{m}_2}$ for $0{\leq} n {\leq} k$ recursively, where $k=\max\{k_1,\dots,k_l\}$ and $\V_{\game^\cC}(s,\theta)=\V_{\game^\cC}(s,\emptyset,\emptyset,\theta_0)$. For any state $s$ and $0{\leq} n {\leq} k$, $D,E \subseteq M$ such that $D \cap E = \emptyset$:
\[
\V_{\game^\cC}(s,D,E,\theta_n) = \left\{ \begin{array}{cl}
(\eta_{D}(1),\dots,\eta_{D}(m)) & \;\; \mbox{if $D \cup E = M$} \\
\V_{\game^\cC}(s,D \cup D',E,\theta_n) & \;\; \mbox{else if $D' \neq \varnothing$} \\
\V_{\game^\cC}(s,D,E \cup E',\theta_n) & \;\; \mbox{else if $E' \neq \varnothing$} \\
\val(\nfgame) & \;\; \mbox{otherwise}
\end{array} \right. 
\]
where $D' =  \{l \in M {\setminus} (D\cup E) \mid s \in \Sat(\phi^l_2)\}$, $E' =  \{l \in M {\setminus} (D\cup E) \mid s \in \Sat(\neg \phi^l_1 \wedge \neg \phi^l_2)\}$ and $\val(\nfgame)$ equals SWNE values of the game $\nfgame = (M,A^\cC,u)$ in which for any $1{\leq}l{\leq}m$ and $a \in A^\cC$:
\[ 
u_l(a) = \left\{
\begin{array}{cl}
1 & \;\;\mbox{if $l \in D$} \\
0 & \;\;\mbox{else if $l \in E$} \\
0 & \;\;\mbox{else if $n_l{-}n \leq 0$} \\
\sum_{s' \in S} \delta^{\cC}(s,a)(s') \cdot v^{s',l}_{n-1} & \;\;\mbox{otherwise}
\end{array} \right.
\]
and $(v^{s',1}_{n-1},v^{s',2}_{n-1},\dots,v^{s',m}_{n-1}) = \V_{\game^\cC}(s',D,E,\theta_{n{-}1})$ for all $s' \in S$.
\startparai{Instantaneous Rewards}
If $\theta = \rewop{r_1}{}{\sinstant{= k_1}}+\cdots+\rewop{r_{m}}{}{\sinstant{= k_{m}}}$, we compute SWNE values of the objectives for the nonzero-sum formulae $\theta_{n}=\rewop{r_1}{}{\sinstant{=n_1-n}}+\dots+\rewop{r_{m}}{}{\sinstant{=n_l-n}}$ for $0{\leq} n {\leq}k$ recursively, where $k{=}\max\{k_1,\dots,k_l\}$ and $\V_{\game^{\cC}}(s,\theta) = \V_{\game^{\cC}}(s, \theta_0)$. For any state $s$ and $0{\leq} n {\leq} k$, $\V_{\game^\cC}(s,\theta_{n})$ equals SWNE values of the game $\nfgame = (M,A^{\cC},u)$ in which for any $1{\leq}l{\leq}m$ and $a \in A^{\cC}$:
\[ 
u_l(a) = \left\{
\begin{array}{cl}
0  & \;\;\mbox{if $n_l{-}n < 0$} \\
\sum_{s' \in S} \delta^{\cC}(s,a)(s') \cdot r^l_S(s')  & \;\;\mbox{else if $n_l{-}n = 0$} \\
\sum_{s' \in S} \delta^{\cC}(s,a)(s') \cdot v^{s',l}_{n+1} & \;\;\mbox{otherwise}
\end{array} \right.
\]
and $(v^{s',1}_{n+1},\dots,v^{s',m}_{n+1}) = \V_{\game^\cC}(s',\theta_{n+1})$ for all $s' \in S$.
\startparai{Bounded Cumulative Rewards}
If $\theta = \rewop{r_1}{}{\scumul{\leq k_1}}+\cdots+\rewop{r_{m}}{}{\scumul{\leq k_{m}}}$, we compute SWNE values of the objectives for the nonzero-sum formulae $\theta_{n}=\rewop{r_1}{}{\scumul{\leq n_1-n}}+\dots+\rewop{r_l}{}{\scumul{\leq n_{m}-n}}$ for $0{\leq} n {\leq} k$ recursively, where $k{=}\max\{k_1,\dots,k_l\}$ and $\V_{\game^{\cC}}(s,\theta)$ $= \V_{\game^{\cC}}(s, \theta_0)$. For any state $s$ and $0{\leq} n {\leq} k$,
$\V_{\game^\cC}(s,\theta_n)$ equals SWNE values of the game $\nfgame = (M,A^{\cC},u)$ in which for any $1{\leq}l{\leq}m$ and $a \in A^{\cC}$:
\[ 
u_l(a) = \left\{
\begin{array}{cl}
0 & \;\;\mbox{if $n_l{-}n \leq 0$} \\
r^l_S(s) + r^l_A(s,a) + \sum_{s' \in S} \delta^{\cC}(s,a)(s') \cdot v^{s',l}_{n+1} & \;\;\mbox{otherwise}
\end{array} \right.
\]
and $(v^{s',1}_{n+1},\dots,v^{s',m}_{n+1}) = \V_{\game^\cC}(s',\theta_{n+1})$ for all $s' \in S$.
\startparai{Probabilistic Until}
If $\theta = \probop{}{\phi^1_1 \until \phi^1_2}{+}\cdots{+}\probop{}{\phi^{m}_1 \until \phi^{m}_2}$, values can be computed through value iteration as the limit $\V_{\game^\cC}(s,\theta) = \lim_{n \ra \infty} \V_{\game^\cC}(s,\theta,n)$ where $\V_{\game^\cC}(s,\theta,n)=\V_{\game^\cC}(s,\emptyset,\emptyset,\theta,n)$ and for any $D,E \subseteq M$ such that $D \cap E = \emptyset$:
\[
\V_{\game^\cC}(s,D,E,\theta,n) = \left\{ \begin{array}{cl}
(\eta_{D}(1),\dots,\eta_{D}(m)) & \;\; \mbox{if $D \cup E = M$} \\
(\eta_{\Sat(\phi^1_2)}(s),\dots,\eta_{\Sat(\phi^{m}_2)}(s)) & \;\; \mbox{else if $n=0$} \\
\V_{\game^\cC}(s,D \cup D',E,\theta,n) & \;\; \mbox{else if $D' \neq \varnothing$} \\
\V_{\game^\cC}(s,D,E \cup E',\theta,n) & \;\; \mbox{else if $E' \neq \varnothing$} \\
\val(\nfgame) & \;\; \mbox{otherwise}
\end{array} \right. 
\]
where $D' =  \{l \in M {\setminus} (D\cup E) \mid s \in \Sat(\phi^l_2)\}$, $E' =  \{l \in M {\setminus} (D\cup E) \mid s \in \Sat(\neg \phi^l_1 \wedge \neg \phi^l_2)\}$ and $\val(\nfgame)$ equals SWNE values of the game $\nfgame = (M,A^\cC,u)$ in which for any $1{\leq}l{\leq}m$ and $a \in A^\cC$:
\[ 
u_l(a) = \left\{
\begin{array}{cl}
1 & \;\;\mbox{if $l \in D$} \\
0 & \;\;\mbox{else if $l \in E$} \\
\sum_{s' \in S} \delta^{\cC}(s,a)(s') \cdot v^{s',l}_{n-1} & \;\;\mbox{otherwise}
\end{array} \right.
\]
and $(v^{s',1}_{n-1},v^{s',2}_{n-1},\dots,v^{s',m}_{n-1}) = \V_{\game^\cC}(s',D,E,\theta,n{-}1)$ for all $s' \in S$.
\startparai{Expected Reachability} If $\theta = \rewop{r_1}{}{\future \phi^1}{+}\cdots{+}\rewop{r_{m}}{}{\future \phi^{m}}$, values can be computed through value iteration as the limit 
$\V_{\game^\cC}(s,\theta) = \lim_{n \ra \infty} \V_{\game^\cC}(s,\theta,n)$ where $\V_{\game^\cC}(s,\theta,n)=\V_{\game^\cC}(s,\emptyset,\theta,n)$ and for any $D \subseteq M$:
\[
\V_{\game^\cC}(s,D,\theta,n) = \left\{ \begin{array}{cl}
\mathbf{0}_{m} & \;\; \mbox{if $D = M$} \\
\mathbf{0}_{m} & \;\; \mbox{else if $n= 0$} \\
\V_{\game^\cC}(s,D \cup D',\theta,n) & \;\; \mbox{else if $D' \neq \varnothing$} \\
\val(\nfgame) & \;\; \mbox{otherwise}
\end{array} \right. 
\]
$D' = \{l \in M {\setminus} D \mid s \in \Sat(\phi^l)\}$ and $\val(\nfgame)$ equals SWNE values of the game $\nfgame = (M,A^\cC,u)$ in which for any $1{\leq}l{\leq}m$ and $a \in A^\cC$:
\[ 
u_l(a) = \left\{
\begin{array}{cl}
0 & \;\;\mbox{if $l \in D$} \\
r^l_S(s) + r^l_A(s,a) + \sum_{s' \in S} \delta^{\cC}(s,a)(s') \cdot v^{s',l}_{n-1} & \;\;\mbox{otherwise}
\end{array} \right.
\]
and $(v^{s',1}_{n-1},v^{s',2}_{n-1},\dots,v^{s',m}_{n-1}) = \V_{\game^\cC}(s',D,\theta,n{-}1)$ for all $s' \in S$.
\startpara{Strategy Synthesis} 
When performing property verification, it is usually beneficial to include \emph{strategy synthesis}, that is, construct a witness to the satisfaction of a property. When verifying a Nash formula $\nashop{C_1{:}\cdots{:}C_m}{\opt \sim x}{\theta}$, we can also return a subgame-perfect SWNE or SCNE for the objectives $(X^\theta_1,\dots,X^\theta_m)$. This is achieved by keeping track of an SWNE for the NFG solved in each state. The synthesised strategies require randomisation and memory. Randomisation is needed for NE of NFGs. Memory is required for finite-horizon properties and since choices change after a path formula becomes true or a target is reached. For infinite-horizon properties, only approximate $\varepsilon$-NE profiles are synthesised.

\startpara{Correctness and Complexity}
The proof of correctness of the algorithm can be found in \appref{correctness-app}. In the case of finite-horizon nonzero-sum formulae the correctness of the model checking algorithm follows from the fact that we use backward induction~\cite{SW+01,NMK+44}. For infinite-horizon nonzero-sum formulae the proof is based on showing that the values of the players computed during value iteration correspond to subgame-perfect SWNE or SCNE values of finite game trees, and the values of these game trees converge uniformly to the actual values of $\game^\cC$.}
The complexity of the algorithm is linear in the formula size, and finding subgame-perfect NE for reachability objectives in $n$-player games is PSPACE~\cite{BBG+19}. Value iteration requires finding all NE for a NFG in each state of the model, and computing NE of an NFG with three (or more) players is PPAD-complete~\cite{DGP09}.

\section{Case Studies and Experimental Results}\label{expts-sec}

We have implemented our approach on top of PRISM-games 3.0~\cite{KNPS20}, extending the implementation to support multi-coalitional equilibria-based properties. The files for the case studies and results in this section are available from~\cite{files}.

\startpara{Implementation} CSGs are specified using the PRISM-games 3.0 modelling language, as described in~\cite{KNPS19,KNPS20}. Models are built and stored using the tool's Java-based `explicit' engine, which employs sparse matrices. Finding SWNE of NFGs, which can be reduced to solving a nonlinear programming problem (see \sectref{mc-sect}), is performed using a combination of the SMT solver Z3~\cite{Z3} and the nonlinear optimisation suite {\sc Ipopt}~\cite{Wac09}. Although SMT solvers are able to find solutions to nonlinear problems, they are not guaranteed to do so and are only efficient in certain cases. These cases include when there is a small number of actions per player or finding support assignments for which an equilibrium is not possible. To mitigate the inefficiencies of the SMT solver, we use Z3 for filtering out unsatisfiable support assignments with a timeout: given a support assignment, Z3 returns either \emph{unsat}, \emph{sat} or \emph{unknown} (if the timeout is reached). If either \emph{sat} or \emph{unknown} are returned, then the assignment is passed to {\sc Ipopt}, which checks for satisfiability (if required) and computes SWNE values using an interior-point filter line-search algorithm~\cite{WB06}. To speed up the overall computation the support assignments are analysed in parallel. We also search for and filter out \emph{dominated strategies} as a precomputation step. The NFGs are built on the fly, as well as the gradient of the objective function \eqnref{opt-eqn} and the Jacobian of the constraints \eqnref{eq-eqn}--\eqnref{ge-eqn}, which are required as an input to {\sc Ipopt}.

\begin{table}[t]
\centering
{\scriptsize
\begin{tabular}{|c|r|r|r||r|r|r|r|} \hline
\multirow{2}{*}{Game} & 
\multicolumn{1}{c|}{\multirow{2}{*}{Players}}
& \multicolumn{1}{c|}{\multirow{2}{*}{Actions}} & \multicolumn{1}{c||}{\multirow{2}{*}{Supports}} &
 \multicolumn{3}{c|}{Supports returned by Z3} &
 \multicolumn{1}{c|}{Time(s)} \\ \cline{5-7}
& & &  & \multicolumn{1}{c|}{\emph{$\;\;$unsat$\;\;$}} & \multicolumn{1}{c|}{\emph{$\;\;$sat$\;\;$}} & \multicolumn{1}{c|}{\emph{unknown}} & \multicolumn{1}{c|}{{\sc Ipopt}}
 \\ \hline \hline	
\multirow{8}{*}{\emph{Majority Voting}}
&  3 & 3,3,3     &    343 &    330 &  12 &     1 & 0.309    \\
&  3 & 4,4,4     &  3,375 &  3,236 & 110 &    29 & 18.89    \\
&  3 & 5,5,5     & 29,791 & 26,250 & 155 & 3,386 & 336.5  \\
&  4 & 2,2,2,2   &     81 &     59 &  22 &     0 & 0.184    \\
&  4 & 3,3,3,3   &  2,401 &  2,212 &  87 &   102 & 6.847    \\
&  4 & 4,4,4,4   & 50,625 & 41,146 & 518 & 8,961 & 1,158 \\
&  5 & 2,2,2,2,2 &    243 &    181 &  62 &     0 & 0.591    \\
&  5 & 3,3,3,3,3 & 16,807 & 14,950 & 266 & 1,591 & 253.3  \\
\hline\hline
\multirow{7}{*}{\emph{Covariant game}}
&  3 & 3,3,3     &    343 &    304 &  6 &     33 & 7.645    \\
&  3 & 4,4,4     &  3,375 &  2,488 & 16 &    871 & 203.8  \\
&  3 & 5,5,5     & 29,791 & 14,271 &  8 & 15,512 & 5,801 \\
&  4 & 2,2,2,2   &     81 &     76 &  3 &      2 & 0.106    \\
&  4 & 3,3,3,3   &  2,401 &  1,831 &  0 &    570 & 183.0  \\
&  5 & 2,2,2,2,2 &    243 &    221 &  8 &     14 & 4.128    \\
&  5 & 3,3,3,3,3 & 16,807 &  6,600 &  7 & 10,200 & 5,002 \\
%\hline\hline
\hline
\end{tabular}}
\vspace*{0.1cm}
\caption{Finding SWNE in NFGs (timeout of 20ms for Z3).}
\label{tab:smt-stats}
\vspace*{-0.6cm}
\end{table} 

Table~\ref{tab:smt-stats} presents experimental results for solving various NFGs (generated with GAMUT~\cite{NWSL04}) using Z3 (with a timeout of 20ms) and {\sc Ipopt}. For each NFG, the table lists the numbers of players, actions of each player and support assignments. The table also includes the supports of each type returned by Z3 and the solution time of {\sc Ipopt}. As can be seen, using Z3 significantly reduces the assignments {\sc Ipopt} needs to analyse, by orders of magnitude in some cases. However, as the number of actions grows, the number of assignments that remain for {\sc Ipopt} to solve increases rapidly, and therefore so does the solution time. Furthermore, increasing the number of players only magnifies this issue.

The results show that solving NFGs can be computationally very expensive.
Note that just finding an NE is already a difficult problem,
whereas we search for SWNE, and hence need to find \emph{all} NE.
For example, in \cite{PNS04}, using a backtracking search algorithm or either of the Simplicial Subdivision~\cite{LTH87} and the Govindan-Wilson~\cite{GW03} algorithms for finding a sample NE, there are instances of NFGs with 6 players and 5 actions that timeout after 30 minutes. %

We also comment that care needs to be taken with numerical computations.
The value iteration part of the model checking algorithm
is (as usual) implemented using floating point arithmetic,
and may therefore exhibit small rounding errors.
However, the intermediate results are passed to solvers,
which may expect inputs in terms of rational numbers (Z3 in this case).
It could be beneficial to investigate the use of arbitrary precision arithmetic instead. %

\vskip0.6em
We now present case studies and experimental results to demonstrate the applicability and performance of our approach and implementation.

\begin{table}[!t]
\centering
{\scriptsize
\begin{tabular}{|c|c|c||r|r|r||r|r|} \hline
\multicolumn{1}{|c|}{Case study \& property} & 
\multicolumn{1}{c|}{\multirow{2}{*}{Players}} & 
\multicolumn{1}{c||}{Param.} & 
\multicolumn{3}{c||}{CSG statistics} & \multicolumn{1}{c|}{Constr.} & \multicolumn{1}{c|}{Verif.} \\ 
\cline{4-6}
\multicolumn{1}{|c|}{[parameters]} &
&
\multicolumn{1}{c||}{values} & 
\multicolumn{1}{c|}{States} & 
\multicolumn{1}{c|}{Max. Act.} & 
\multicolumn{1}{c||}{Trans.} & \multicolumn{1}{c|}{time(s)} &  \multicolumn{1}{c|}{time (s)} \\ \hline \hline
\multirow{4}{*}{\shortstack[c]{\emph{Secret Sharing} \\
$\rewop{}{\max=?}{\future \mathsf{d}{\vee}\mathit{r}{=}r_\mathit{max}}$ \\
$\mathit{model}$/[$\alpha$,$r_\mathit{max}$,$p_{fail}$]}} &
\multirow{4}{*}{3} 
&  \emph{raa}/0.3,10,\hrulefill  & 4,279 &   2,1,1 &  5,676 & 0.057 & 0.565 \\
&& \emph{rba}/0.3,10,0.2        &  7,095 &   2,1,1 &  9,900 & 0.090 & 0.939 \\
&& \emph{rra}/0.3,10,\hrulefill &  8,525 &   2,2,1 & 11,330 & 0.250 & 25.79 \\
&& \emph{rrr}/0.3,10,\hrulefill & 17,017 &   2,2,2 & 22,638 & 0.250 & 96.07 \\
\hline\hline
\multirow{8}{*}{\shortstack[c]{\emph{Public Good} \\
$\rewop{}{\max=?}{\sinstant{=k_{\scale{.75}{max}}}}$ \\
$\mbox{[$f,k_{max}$]}$}} &
\multirow{3}{*}{3}
&  2.9,2 &     758 & 3,3,3 &   1,486 & 0.098 &  7.782 \\
&& 2.9,3 &  16,337 & 3,3,3 &  36,019 & 0.799 &  110.1 \\
&& 2.9,4 & 279,182 & 3,3,3 & 703,918 & 6.295 &  1,459 \\
\cline{2-8}
&\multirow{3}{*}{4}
&  2.9,1 &      83 & 3,3,3,3 &     163 & 0.046 & 0.370 \\
&& 2.9,2 &   6,644 & 3,3,3,3 &  13,204 & 0.496 & 7.111 \\
&& 2.9,3 & 399,980 & 3,3,3,3 & 931,420 & 11.66 & 99.86 \\
\cline{2-8}
&\multirow{2}{*}{5}
&  2.9,1 &    245 & 3,3,3,3,3 &     487 & 0.081 & 2.427 \\
&& 2.9,2 & 59,294 & 3,3,3,3,3 & 118,342 & 2.572 & 2,291 \\
\hline\hline
\multirow{8}{*}{\shortstack[c]{\emph{Aloha (deadline)} \\
$\probop{\max=?}{\future \mathsf{s}_i {\wedge} t{\leq}D}$ \\
$\mbox{[$b_{max},D$]}$}} &
\multirow{4}{*}{3}
&  1,8 &   3,519 & 2,2,2 &     5,839 & 0.168 &  11.23 \\
&& 2,8 &  14,230 & 2,2,2 &    28,895 & 0.430 &  14.05 \\
&& 3,8 &  72,566 & 2,2,2 &   181,438 & 1.466 &  18.41 \\
&& 4,8 & 413,035 & 2,2,2 & 1,389,128 & 7.505 &  43.23 \\
\cline{2-8}
&
\multirow{3}{*}{4}
&  1,8 &    23,251 & 2,2,2,2 &    42,931 & 0.708 & 75.59 \\
&& 2,8 &   159,892 & 2,2,2,2 &   388,133 & 3.439 & 131.7 \\
&& 3,8 & 1,472,612 & 2,2,2,2 & 4,777,924 & 28.69 & 819.2 \\
\cline{2-8}
& \multirow{1}{*}{5}
&  1,8 & 176,777 & 2,2,2,2,2 & 355,209 & 3.683 & 466.3 \\
\hline \hline
\multirow{4}{*}{\shortstack[c]{\emph{Aloha} \\
$\rewop{}{\min=?}{\future \mathsf{s}_i}$ \\
$\mbox{[$b_{max}$]}$}} &
\multirow{4}{*}{3}
&  1 &    1,034 & 2,2,2 &   1,777 & 0.096 & 40.76 \\
&& 2 &    5,111 & 2,2,2 &  10,100 & 0.210 & 29.36 \\
&& 3 &   22,812 & 2,2,2 &  56,693 & 0.635 & 51.22 \\
&& 4 &  107,799 & 2,2,2 & 355,734 & 2.197 & 150.1 \\
\hline\hline
\multirow{4}{*}{\shortstack[c]{\emph{Medium access}  \\
$\rewop{}{\max=?}{\scumul{\leq k}}$ \\
$\mbox{[$e_{max},k$]}$}} 
&
\multirow{3}{*}{3}
&   $\;$5,10 &  1,546 & 2,2,2 &   17,100 & 0.324 &  147.9 \\
&&     10,10 & 10,591 & 2,2,2 &  135,915 & 1.688 &  682.7 \\
&&     15,20 & 33,886 & 2,2,2 &  457,680 & 4.663 &  6,448 \\
\cline{2-8}
& \multirow{1}{*}{4}
&  $\;$5,5 & 15,936 & 2,2,2,2 & 333,314 & 4.932 & 3,581 \\
\hline
\end{tabular}}
\vspace*{0.1cm}
\caption{Statistics for a representative set of CSG verification instances.}\label{tab:results}
\vspace*{-1cm}
\end{table}

\startpara{Efficiency and Scalability} Table~\ref{tab:results} presents a selection of results demonstrating the performance of the implementation. The models in the table are discussed in more detail below. The results were carried out using a 2.10GHz Intel Xeon Gold with 16GB of JVM memory. The table includes statistics for the models: number of players, states, (maximum) actions for each player in a state, transitions and the times to both build and verify the models. All models have been verified in under 2 hours and in most cases much less than this. The largest model, verified in under 15 minutes, has 4 players, almost 1.5 million states and 5 million transitions. The majority of the time is spent solving NFG games and, as shown in Table~\ref{tab:smt-stats}, this varies depending on the number of choices and players.

\startpara{Secret Sharing} The first case study is the \emph{secret sharing} protocol of~\cite{HT04}, which uses uncertainty to induce cooperation. The protocol is defined for 3 agents and can be extended to more agents by partitioning the agents into three groups. Since the 3 agents act independently, this protocol could not be analysed with the two-coalitional variant of rPATL~\cite{KNPS19}. Each agent has an unfair coin with the same bias ($\alpha$). In the first step of the protocol, agents flip their coins, and if their coins land on heads, they are supposed to send their share of the secret to the other agents. In the second step, everyone reveals the value of their coin to the other agents. The game ends if all agents obtain all shares and therefore can all reconstruct the secret, or an agent cheats, i.e., fails to send their share to another agent when they are supposed to. %
If neither of these conditions hold, new shares are issued to the agents and a new round starts. The protocol assumes that each agent prefers to learn the secret and that others do not learn. This is expressed by the utilities $u_{3}$, $u_{2}$, $u_{1}$ and $u_{0}$ that an agent $i$ gets if all the agents, two agents (including $i$), only $i$ and no agent is able to learn the secret, respectively. %

A \emph{rational} agent in this context is one that has the choice of cheating and ignoring the coin toss in order to maximise their utility. An \emph{altruistic} agent is one who strictly follows the protocol and a \emph{byzantine} agent has a probability ($p_\mathit{fail}$) of failing and subsequently sending or computing the wrong values. \figref{fig::ssraa_rba} presents the expected utilities when there are two altruistic and one rational agent and when there is one altruistic, one byzantine and one rational agent as $\alpha$ varies. The results when there is one altruistic and two rational agents or three rational agents yield the same graph as
\figref{fig::ssraa_rba}(a), where the one or two additional rational agents utilities match those of the altruistic agents. According to the theoretical results of \cite{HT04}, for a model with one rational and two altruistic agents, the rational agent only has an incentive to cheat if:
\begin{equation}\label{ss-eqn}
(u_{1}{\cdot}\alpha^2 + u_{0}{\cdot}(1-\alpha)^2)/(\alpha^2+(1-\alpha)^2) > u_{3} \, .
\end{equation}
This result is validated by \figref{fig::ssraa_rba}(a) for the given utility values; the rational agent only cheats when $\alpha {\geq} 0.5$ (for $\alpha{<}0.5$ all agents receive a utility of 1 corresponding to all agents getting the secret), which corresponds to when \eqnref{ss-eqn} holds for our chosen utility values. Furthermore, \figref{fig::ssraa_rba} also shows that the closer $\alpha$ is to one then the greater the expected utility of a rational agent. \figref{fig::ssraa_rba}(b) also shows that, with a byzantine agent, the rational agent cheats when $\alpha{\geq}0.4$. 

\figref{fig::ssrra_rrr_rounds} plots the expected utilities of the agents when the protocol stops after a maximum number of rounds ($r_\mathit{max}$) when $\alpha{=}0.3$ and $\alpha{=}0.8$. The utilities converge more slowly for $\alpha{=}0.3$, since, when $\alpha$ is small,  there is a higher chance that an agent flips tails in a round, meaning not all agents will share their secret in this round and the protocol will move into another round. Again we see that there are more incentives for a rational agent to cheat as $\alpha$ gets closer to 1. However, when $\alpha{=}0.3$ and there are altruistic agents, the incentive decreases and eventually disappears as the number of rounds increases.

\begin{figure}[t]
\centering
\begin{subfigure}{.49\textwidth}
\centering
\scriptsize{
\begin{tikzpicture}
\begin{axis}[
    %no markers,
    title style={yshift=-2ex},
    title={$u_3{=}1.0,u_2{=}1.5,u_1{=}2.0,u_0{=}0.0$},
    ylabel={Expected Utilities},
    xlabel={$\alpha$},
    xmin=0.1, xmax=1.0,
    xtick={0.1,0.2,0.3,0.4,0.5,0.6,0.7,0.8,0.9,1.0},
    ymin=0.0, ymax=2.0,
    ytick={0.0,0.2,0.4,0.6,0.8,1.0,1.2,1.4,1.6,1.8,2.0},
    ymajorgrids=true,
    grid style=dashed,
    height=4.65cm,
    width=0.95\textwidth,
    legend entries={
                {$\mathit{rational}$},
                {$\mathit{altruistic}$},
                {$\mathit{altruistic}$}
                },
    legend style={at={(0,1)},
                anchor=north west, 
                nodes={scale=0.9, transform shape}}            
]
\addlegendimage{mark=square*,red,mark size=1.5pt}
\addlegendimage{mark=*,blue,mark size=1.5pt}
\addlegendimage{mark=triangle,orange,mark size=1.5pt}
]

\addplot[mark=square*,red,mark size=1.5pt] table [x=alpha, y=p1, col sep=comma]{scrt_shrng_raa.txt};
\addplot[mark=*,blue,mark size=1.5pt] table [x=alpha, y=p2, col sep=comma]{scrt_shrng_raa.txt};
\addplot[mark=triangle,orange,mark size=1.5pt] table [x=alpha, y=p3, col sep=comma]{scrt_shrng_raa.txt};

\end{axis}
\end{tikzpicture}
}
\vspace*{-0.2cm}
\caption{2 altruistic and 1 rational}
%\label{fig:ssraa}
\end{subfigure}
\begin{subfigure}{.49\textwidth}
\centering
\scriptsize{
\begin{tikzpicture}
\begin{axis}[
    %no markers,
    title style={yshift=-2ex},
    title={$u_3{=}1.0,u_2{=}1.5,u_1{=}2.0,u_0{=}0.0$},
    ylabel={Expected Utilities},
    xlabel={$\alpha$},
    xmin=0.1, xmax=1.0,
    xtick={0.1,0.2,0.3,0.4,0.5,0.6,0.7,0.8,0.9,1.0},
    ymin=0.0, ymax=2.0,
    ytick={0.0,0.2,0.4,0.6,0.8,1.0,1.2,1.4,1.6,1.8,2.0},
    ymajorgrids=true,
    grid style=dashed,
    height=4.65cm,
    width=0.95\textwidth,
    legend entries={
                {$\mathit{rational}$},
                {$\mathit{byzantine}$},
                {$\mathit{altruistic}$}
                },
    legend style={at={(0.0,1.0)},
                anchor=north west, 
                nodes={scale=0.9, transform shape}}            
]
\addlegendimage{mark=square*,red,mark size=1.5pt}
\addlegendimage{mark=*,blue,mark size=1.5pt}
\addlegendimage{mark=triangle,orange,mark size=1.5pt}
]

\addplot[mark=square*,red,mark size=1.5pt] table [x=alpha, y=p1, col sep=comma]{scrt_shrng_rba_pf02.txt};
\addplot[mark=*,blue,mark size=1.5pt] table [x=alpha, y=p2, col sep=comma]{scrt_shrng_rba_pf02.txt};
\addplot[mark=triangle,orange,mark size=1.5pt] table [x=alpha, y=p3, col sep=comma]{scrt_shrng_rba_pf02.txt};

\end{axis}
\end{tikzpicture}
}
\vspace*{-0.2cm}
\caption{1 altruistic, 1 byzantine and 1 rationalå}
%\label{fig:ssraa}
\end{subfigure}
\vspace*{-0.2cm}
\caption{$\nashop{\mathit{usr}_1{:}\mathit{usr}_2{:}\mathit{usr}_3}{\max=?}{\rewop{}{}{\future \mathsf{done}}{+}\rewop{}{}{\future \mathsf{done}}{+}\rewop{}{}{\future \mathsf{done}}}$ ($p_{fail}{=}0.2$).}\label{fig::ssraa_rba}
\vspace*{-0.2cm}
\end{figure}

\begin{figure}[t]
\centering
\begin{subfigure}{0.49\textwidth}
\centering
\scriptsize{
\begin{tikzpicture}
\begin{axis}[
    %no markers,
    title style={yshift=-2ex},
    title={$u_3{=}1.0,u_2{=}1.5,u_1{=}2.0,u_0{=}0$},
    ylabel={Expected Utilities (sum)},
    xlabel={$r_\mathit{max}$},
    xmin=0, xmax=100,
    xtick={0,10,20,30,40,50,60,70,80,90,100},
    ymin=0.0, ymax=3,
    ytick={0,0.5,1.0,1.5,2.0,2.5,3.0},
     ymajorgrids=true,
    grid style=dashed,
    height=4.65cm,
    width=0.95\textwidth,
    legend entries={
                {$\mathit{aaa}$},
                {$\mathit{raa}$},
                {$\mathit{rba}$},
                {$\mathit{rra}$},
                {$\mathit{rrr}$},
                },
    legend style={at={(1.0,0.325)},
                anchor=south east, 
                nodes={scale=0.9, transform shape}}            
]
\addlegendimage{mark=square*,red,mark size=1.5pt}
\addlegendimage{mark=*,blue,mark size=1.5pt}
\addlegendimage{mark=triangle,orange,mark size=1.5pt}
\addlegendimage{mark=+,cyan,mark size=1.5pt}
\addlegendimage{mark=square,magenta,mark size=1.5pt}
]

\addplot[mark=square*,red,mark size=1.5pt] table [x=rmax, y=aaa, col sep=comma]{scrt_shrng_a03_sums_v2.txt};
\addplot[mark=*,blue,mark size=1.5pt] table [x=rmax, y=raa, col sep=comma]{scrt_shrng_a03_sums_v2.txt};
\addplot[mark=triangle,orange,mark size=1.5pt] table [x=rmax, y=rba, col sep=comma]{scrt_shrng_a03_sums_v2.txt};
\addplot[mark=+,cyan,mark size=1.5pt] table [x=rmax, y=rra, col sep=comma]{scrt_shrng_a03_sums_v2.txt};
\addplot[mark=square,magenta,mark size=1.5pt] table [x=rmax, y=rrr, col sep=comma]{scrt_shrng_a03_sums_v2.txt};

\end{axis}
\end{tikzpicture}
}
\vspace*{-0.2cm}
\caption{$\alpha{=}0.3$}

\end{subfigure}
\begin{subfigure}{0.49\textwidth}
\centering
\scriptsize{
\begin{tikzpicture}
\begin{axis}[
    %no markers,
    title style={yshift=-2ex},
    title={$u_3{=}1.0,u_2{=}1.5,u_1{=}2.0,u_0{=}0$},
    ylabel={Expected Utilities (sum)},
    xlabel={$r_\mathit{max}$},
    xmin=0, xmax=10,
    xtick={0,1,2,3,4,5,6,7,8,9,10},
    ymin=0.0, ymax=3,
    ytick={0,0.5,1.0,1.5,2.0,2.5,3.0},
    ymajorgrids=true,
    grid style=dashed,
    height=4.65cm,
    width=0.95\textwidth,
    legend entries={
                {$\mathit{aaa}$},
                {$\mathit{raa}$},
                {$\mathit{rba}$},
                {$\mathit{rra}$},
                {$\mathit{rrr}$},
                },
    legend style={at={(1.0,0.0)},
                anchor=south east, 
                nodes={scale=0.9, transform shape}}            
]
\addlegendimage{mark=square*,red,mark size=1.5pt}
\addlegendimage{mark=*,blue,mark size=1.5pt}
\addlegendimage{mark=triangle,orange,mark size=1.5pt}
\addlegendimage{mark=+,cyan,mark size=1.5pt}
\addlegendimage{mark=square,magenta,mark size=1.5pt}
]

\addplot[mark=square*,red,mark size=1.5pt] table [x=rmax, y=aaa, col sep=comma]{scrt_shrng_a08_sums.txt};
\addplot[mark=*,blue,mark size=1.5pt] table [x=rmax, y=raa, col sep=comma]{scrt_shrng_a08_sums.txt};
\addplot[mark=triangle,orange,mark size=1.5pt] table [x=rmax, y=rba, col sep=comma]{scrt_shrng_a08_sums.txt};
\addplot[mark=+,cyan,mark size=1.5pt] table [x=rmax, y=rra, col sep=comma]{scrt_shrng_a08_sums.txt};
\addplot[mark=square,magenta,mark size=1.5pt] table [x=rmax, y=rrr, col sep=comma]{scrt_shrng_a08_sums.txt};

\end{axis}
\end{tikzpicture}
}
\vspace*{-0.2cm}
\caption{$\alpha{=}0.8$}

\end{subfigure}
\vspace*{-0.2cm}
\caption{Expected utilities over a bounded number of rounds ($p_{fail}{=}0.2$ for $\mathit{rba}$).}\label{fig::ssrra_rrr_rounds}
\vspace*{-0.6cm}
\end{figure}

\startpara{Public Good Game} We consider a variant of the public good game presented in \egref{csg-eg}, in which the parameter $f$ is fixed, where each player receives an initial amount of capital ($e_\mathit{init}$) and, in each of $k$ months, can invest none, half or all of their current capital. A 2-player version of the game was modelled in~\cite{KNPS20}.

\begin{figure}[t]
\centering
\begin{subfigure}{.49\textwidth}
\centering
\scriptsize{
\begin{tikzpicture}
\begin{axis}[
    %no markers,
    title style={yshift=-2ex},
    ylabel={Expected Capital},
    xlabel={$f$},
    xmin=1.5, xmax=3.0,
    xtick={1.5,2.0,2.5,3.0},
    ymin=0, ymax=275,
    ytick={0,25,50,75,100,125,150,175,200,225,250,275},
    ymajorgrids=true,
    grid style=dashed,
    height=4.65cm,
    width=0.95\textwidth,
    legend entries={
                {$\coalition{p_1}$},
                {$\coalition{p_2,p_3}$},
                {$p_1$},
                {$p_2$},
                {$p_3$},
                },
    legend style={at={(0.0,1.0)},
                anchor=north west, 
                nodes={scale=0.9, transform shape}}            
]
\addlegendimage{mark=square*,red,mark size=1.5pt}
\addlegendimage{mark=*,blue,mark size=1.5pt}
\addlegendimage{mark=triangle,orange,mark size=1.5pt}
\addlegendimage{mark=+,cyan,mark size=1.5pt}
\addlegendimage{mark=square,magenta,mark size=1.5pt}
]

\addplot[mark=square*,red,mark size=1.5pt] table [x=f, y=2p1, col sep=comma]{pblc_gd_3act.txt};
\addplot[mark=*,blue,mark size=1.5pt] table [x=f, y=2p23, col sep=comma]{pblc_gd_3act.txt};
\addplot[mark=triangle,orange,mark size=1.5pt] table [x=f, y=3p1, col sep=comma]{pblc_gd_3act.txt};
\addplot[mark=+,cyan,mark size=1.5pt] table [x=f, y=3p2, col sep=comma]{pblc_gd_3act.txt};
\addplot[mark=square,magenta,mark size=1.5pt] table [x=f, y=3p3, col sep=comma]{pblc_gd_3act.txt};

\end{axis}
\end{tikzpicture}
}
\vspace*{-0.2cm}
\caption{Individual rewards.}
%\label{fig:ssraa}
\end{subfigure}
\begin{subfigure}{.49\textwidth}
\centering
\scriptsize{
\begin{tikzpicture}
\begin{axis}[
    %no markers,
    title style={yshift=-2ex},
    ylabel={Expected Capital (sum)},
    xlabel={$f$},
    xmin=1.5, xmax=3.0,
    xtick={1.5,2.0,2.5,3.0},
    ymin=0, ymax=405,
    ytick={0,50,100,150,200,250,300,350,400},
    ymajorgrids=true,
    grid style=dashed,
    height=4.65cm,
    width=0.95\textwidth,
    legend entries={
                {$\coalition{p_1}+\coalition{p_2,p_3}$},
                {$p_1+p_2+p_3$}
                },
    legend style={at={(0,1)},
                anchor=north west, 
                nodes={scale=0.9, transform shape}}            
]
\addlegendimage{mark=square*,green,mark size=1.5pt}
\addlegendimage{mark=*,purple,mark size=1.5pt}
]

\addplot[mark=square*,green,mark size=1.5pt] table [x=f, y=2psum, col sep=comma]{pblc_gd_3act.txt};
\addplot[mark=*,purple,mark size=1.5pt] table [x=f, y=3psum, col sep=comma]{pblc_gd_3act.txt};

\end{axis}
\end{tikzpicture}
}
\vspace*{-0.2cm}
\caption{Sum of rewards.}
%\label{fig:ssraa}
\end{subfigure}
\vspace*{-0.2cm}
\caption{$\nashop{p_1{:}p_2{:}p_3}{\max=?}{\rewop{c_1}{}{\sinstant{=r_\mathit{max}}}{+}\rewop{c_2}{}{\sinstant{=r_\mathit{max}}}{+}\rewop{c_3}{}{\sinstant{=r_\mathit{max}}}}$~($e_\mathit{init}{=}5$, $k{=}3$).}\label{fig:pg}
\vspace*{-0.6cm}
\end{figure}

\figref{fig:pg} presents results for the 3-player public good game as $f$ varies, plotting the expected utilities when the players act in isolation and, for comparison, when player 1 acts in isolation and players 2 and 3 form a coalition (indicated by $\coalition{}$), which would be required if the two-coalitional variant of rPATL~\cite{KNPS19} was used. When the players act in isolation, if $f{\leq}2$, then there is no incentive for the players to invest. As $f$ increases, the players start to invest some of their capital in some of the months, and when $f{=}3$ each player invests all their capital in each month. On the other hand, when players 2 and 3 act in a coalition, there is incentive to invest capital for smaller values of $f$, as players 2 and 3 can coordinate their investments to ensure they both profit; however, player 1 also gains from these investments, and therefore has no incentive to invest in the final month. As $f$ increases, there is a greater incentive for player 1 to invest and the final capital for all the players increases. The drop in the capital of player 1, as $f$ increases, is caused by players 2 and 3 coordinating against player 1 and decreasing their investments. This forces player 1 to invest to increase its investment which, as profits are shared, also increases the capital of players 2 and 3.

\startpara{Aloha} This case study concerns a number of users trying to send packets using the slotted ALOHA protocol introduced in~\cite{KNPS19}. In a time slot, if a single user tries to send a packet, there is a probability $q$ that the packet is sent; if $k$ users try and send, then the probability decreases to $q/k$. If sending a packet fails, the number of slots a user waits before resending is set according to an exponential backoff scheme. The analysis of the model in~\cite{KNPS19} consisted of considering three users with two acting in coalition. We extend the analysis by considering the case when the three act in isolation and extend the model with a fourth user. The objectives concern maximising the probability of sending a packet within a deadline, e.g.\ $\nashop{\mathit{usr}_1{:}\cdots{:}\mathit{usr}_m}{\max=?}{\probop{}{\future (\mathsf{s}_1 {\wedge} t{\leq}D)}{+}{\cdots}{+}\probop{}{\future (\mathsf{s}_m {\wedge} t{\leq}D)}}$, and the expected time to send a packet. By allowing the users to act independently we find that the expected time required for all users to send their packets reduces compared to when two of the players act as a coalition.

\startpara{Medium Access Control} This case study is based on a deterministic concurrent game model of medium access control~\cite{BRE13}. The model consists of two users that have limited energy and share a wireless channel. The users repeatedly choose to transmit or wait and, if both transmit, the transmissions fail due to interference. We previously extended the model to three users and added the probability of transmissions failing (which is dependent on the number of users transmitting)~\cite{KNPS19}. However, the analysis was restricted to the scenario where two users were in coalition~\cite{KNPS19}. We can now remove this restriction and analyse the case when each user tries to maximise the expected number of messages they send over a bounded number of steps and extend this analysis to four users.  %
\section{Conclusions}

We have presented a logic and algorithm for model checking \emph{multi-coalitional} equilibria-based properties of CSGs, focusing on a variant of stopping games. We have implemented the approach in PRISM-games and demonstrated its applicability on a range of case studies and properties.
The main limitation of the approach is the time required for solving NFGs during value iteration as the number of players increases. Efficiency improvements that could be employed include filtering out conditionally dominated strategies~\cite{SW98}.
Future work will also include investigating correlated equilibria~\cite{Aum74} and mechanism design~\cite{NNGP09}.

\vskip0.2em
\startpara{Acknowledgements} 
This project has received funding from the European Research Council (ERC)
under the European Union’s Horizon 2020 research and innovation programme
(grant agreement No.~834115) 
and the EPSRC Programme Grant on Mobile Autonomy (EP/M019918/1).

\clearpage
\bibliographystyle{splncs03.bst}
\bibliography{bib}

\appendix
\section{Correctness of the Model Checking Algorithm for Nonzero-Sum State Formulae}\label{correctness-app}

We fix a game $\game$ and nonzero-sum state formula $\nashop{C_1{:}{\cdots}{:}C_m}{\opt \sim x}{\theta}$ and let $\cC = \{ C_1, \dots, C_m \}$. For the case of finite-horizon nonzero-sum formulae the correctness of the model checking algorithm follows from the fact that we use backward induction~\cite{SW+01,NMK+44}. Below we consider probabilistic and expected reachability objectives in the case that $\opt{=}\max$. The remaining cases for infinite-horizon nonzero-sum formulae follow similarly. For a nonzero-sum formula $\theta$, we denote by $\theta_i$ the $i$th term in $\theta$.
We first introduce the following objectives for the coalition game $\game^\cC$ which are $n$-step approximations of $X^\theta_{i}$.
\begin{definition}\label{bounded-objective-def}
For any probabilistic or expected reachability nonzero-sum formula $\theta$, $1{\leq}i{\leq}m$ and $n \in \Nset$, let $X^\theta_{i,n}$ be the objective where for any path $\pi$ of $\game^\cC:$
\begin{eqnarray*}
X^{\probop{}{\future \! \phi^1}{+}{\cdots}{+}\probop{}{\future \! \phi^m}}_{i,n}(\pi) &\ = \ & \left\{ \begin{array}{cl}
1 & \ \mbox{if $\exists k {\leq} n. \, \pi(k) \sat \phi^i$} \\
0 & \ \mbox{otherwise}
\end{array} \right. \\
X^{\rewop{r_{\scale{.75}{1}}}{}{\future \! \phi^1}{+}{\cdots}{+}\rewop{r_{\scale{.75}{m}}}{}{\future \! \phi^m}}_{i,n}(\pi) & \ = \ & \left\{ \begin{array}{cl}
\infty
& \ \mbox{if} \; \forall k \in \Nset . \, \pi(k) \notsat \phi^i \\
\mbox{$\sum_{k=0}^{k^i}$} r(\pi,k) & \ \mbox{if $k^i \leq n{-}1$} \\
0 & \ \mbox{otherwise}
\end{array} \right.
\end{eqnarray*}
$r(\pi,k)=r_A(\pi(k),\pi[k])+r_S(\pi(k))$ and $k^i = \min \{ k{-}1 \mid k \in \Nset \wedge \pi(k) \sat \phi^i \}$.
\end{definition}
The following lemma demonstrates that, for a fixed strategy profile and state, the values of these objectives are non-decreasing and converge uniformly to the values of $\theta$.
\begin{lemma}\label{epsilon-lem}
For any probabilistic or expected reachability nonzero-sum formula $\theta$ we have that the sequence $\langle \Eset^{\sigma}_{\game^\cC,s}(X^\theta_i) \rangle_{n \in \Nset}$ is non-decreasing and, for any $\varepsilon{>}0$, there exists $N \in \Nset$ such that for any $n {\geq} N$, $s \in S$, $\sigma \in \Sigma_{\game^\cC}$ and $1{\leq}i{\leq}m:$
\[
0 \ \leq \ \Eset^{\sigma}_{\game^\cC,s}(X^\theta_i) - \Eset^{\sigma}_{\game^\cC,s}(X^\theta_{i,n}) \ \leq \ \varepsilon \, .
\]
\end{lemma}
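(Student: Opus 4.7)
\medskip\noindent
\textbf{Proof proposal.}
The plan is to reduce both claims (monotonicity and uniform convergence) to a single fact derived from Assumption~\ref{game-assum}: there exist $N \in \Nset$ and $\lambda \in [0,1)$ such that for every profile $\sigma$, every state $s$ and every $1 {\leq} i {\leq} m$,
\[
\Pr^{\sigma}_{\game^\cC,s}\!\big(\,\text{$\phi^i$ is not reached within $N$ steps}\,\big) \;\leq\; \lambda.
\]
Taking this as given, monotonicity follows pointwise from Definition~\ref{bounded-objective-def}. For the probabilistic case, if $\pi(k) \sat \phi^i$ for some $k \leq n$, then also for some $k \leq n+1$, so $X^\theta_{i,n}(\pi) \leq X^\theta_{i,n+1}(\pi)$. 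For the expected reachability case (with non-negative reward structures, which is the standard convention to make $\rew(r,\future\phi)$ consistent with the $\infty$ default), $X^\theta_{i,n+1}$ either equals $X^\theta_{i,n}$ on paths with $k^i \leq n{-}1$, or dominates it (0 becomes the non-negative accumulated reward) when $k^i = n$. Taking expectations yields the non-decreasing property.

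For the uniform stopping bound, I would argue as follows. Consider the finite-state MDP $\cM_i$ obtained from $\game^\cC$ by collapsing all coalitions into a single decision maker that tries to \emph{avoid} $\Sat(\phi^i)$, i.e.\ minimises the reachability probability of $\Sat(\phi^i)$. Assumption~\ref{game-assum} says that every profile of $\game^\cC$ reaches $\Sat(\phi^i)$ with probability~1, so in particular the minimum reachability probability in $\cM_i$ equals~1 from every state. By standard finite MDP theory (e.g.\ a memoryless optimal strategy exists, and the induced absorbing Markov chain has geometric tail), there is an $N_i$ and $\lambda_i {<} 1$ with $\max_\sigma \Pr^\sigma_{\cM_i,s}(\neg \future^{\leq N_i}\phi^i) \leq \lambda_i$ uniformly in $s$. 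Since any profile in $\game^\cC$ induces a Markov chain that is a feasible strategy in $\cM_i$, the same bound holds in $\game^\cC$. Taking $N = \max_i N_i$ and $\lambda = \max_i \lambda_i$ gives the claim; iterating yields $\Pr^{\sigma}(\neg \future^{\leq kN}\phi^i) \leq \lambda^k$ uniformly.

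I would then conclude as follows. In the probabilistic case, on any path $\pi$, $X^\theta_i(\pi) - X^\theta_{i,n}(\pi) = 1$ iff $\phi^i$ is first satisfied strictly after step $n$, so
\[
0 \;\leq\; \Eset^{\sigma}_{\game^\cC,s}(X^\theta_i) - \Eset^{\sigma}_{\game^\cC,s}(X^\theta_{i,n}) \;\leq\; \Pr^{\sigma}_{\game^\cC,s}(\neg \future^{\leq n}\phi^i) \;\leq\; \lambda^{\lfloor n/N \rfloor},
\]
which is below any prescribed $\varepsilon$ once $n$ is large enough, independently of $\sigma$, $s$ and $i$. In the expected reachability case, let $R$ be the maximum per-step reward (which is bounded since $S$ and $A$ are finite). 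The difference $X^\theta_i(\pi) - X^\theta_{i,n}(\pi)$ is $\sum_{k=0}^{k^i} r(\pi,k)$ on the event $\{k^i \geq n\}$ and zero otherwise, so it is dominated by $R(k^i{+}1)\mathbf{1}\{k^i \geq n\}$. Writing the expectation as a tail sum and applying the geometric bound yields
\[
0 \;\leq\; \Eset^\sigma_{\game^\cC,s}(X^\theta_i) - \Eset^\sigma_{\game^\cC,s}(X^\theta_{i,n}) \;\leq\; R \sum_{k \geq n} (k{+}1)\,\lambda^{\lfloor k/N \rfloor},
\]
a convergent series whose tail can be made smaller than any $\varepsilon$, uniformly in $\sigma$, $s$ and $i$. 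Choosing a single $N$ that works for all $i$ (there are finitely many) concludes the proof.

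The main obstacle is the uniform stopping bound; the rest is bookkeeping. The subtlety is that Assumption~\ref{game-assum} is a pointwise statement (every profile reaches with probability~1), and one needs to upgrade it to a \emph{uniform} geometric tail across the whole strategy space. The cleanest route is the reduction to the adversarial MDP $\cM_i$ described above, where the classical finite MDP argument (optimal memoryless strategy plus absorbing-chain geometry) applies directly; alternatively one could invoke compactness of the set of memoryless stationary profiles together with continuity of $n$-step non-reach probabilities in the stationary distribution, but the MDP reduction is shorter.
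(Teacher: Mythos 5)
Your proof is correct, and it takes a genuinely different --- and more rigorous --- route than the paper. The paper's own proof fixes a profile $\sigma$, observes that $\game^\cC$ under $\sigma$ becomes a DTMC and that $\Eset^{\sigma}_{\game^\cC,s}(X^\theta_{i,n})$ is the $n$th value-iteration iterate for computing $\Eset^{\sigma}_{\game^\cC,s}(X^\theta_i)$ in that DTMC, and then simply asserts that monotonicity and \emph{uniform} convergence follow. That argument gives pointwise convergence for each fixed $\sigma$ essentially for free, but the uniformity over the uncountable family of (history-dependent) profiles --- which is exactly what the subsequent $\varepsilon$-NE theorem needs --- is left implicit. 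You correctly identify this as the main obstacle and supply the missing ingredient: a uniform geometric stopping bound $\sup_{\sigma,s}\Pr^{\sigma}_{\game^\cC,s}(\neg\futurestar\text{-within-}N) \leq \lambda < 1$, obtained by collapsing the coalitions into a single adversarial decision-maker in a finite MDP (where Assumption~\ref{game-assum} forces the minimal reachability probability to be $1$, since a deterministic memoryless counterexample in the MDP would itself be a valid profile of $\game^\cC$), followed by explicit tail estimates for both objective types. The only caveats, which you already flag, are the implicit non-negativity of rewards (needed both for your monotonicity argument and for the lemma's own lower bound of $0$, so the paper assumes it too) and that for general until formulae the target of the stopping argument should be the ``determined'' set $\Sat(\neg\phi_1^i\vee\phi_2^i)$ rather than $\Sat(\phi^i)$; neither is a gap. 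In short, your proof is longer but actually proves the uniformity claim that the paper's one-line conclusion glosses over.
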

\begin{proof}
Consider any probabilistic or expected reachability nonzero-sum formula $\theta$, state $s$ and $1{\leq} i {\leq} m$. Using \assumref{game-assum} we have that for subformulae $\rewop{r}{}{\future \phi^i}$, the set $\Sat(\phi^i)$ is reached with probability 1 from all states of $\game$ under all profiles, and therefore $\Eset^{\sigma}_{\game^\cC,s}(X^\theta_i)$ is finite. Furthermore, for any $n \in N$, by \defdefref{sem-def}{bounded-objective-def} we have that $\Eset^{\sigma}_{\game^\cC,s}(X^\theta_{i,n})$ is the value of state $s$ for the $n$th iteration of value iteration~\cite{CH08} when computing $\Eset^{\sigma}_{\game^\cC,s}(X^\theta_i)$ in the DTMC obtained from $\game^\cC$ by following the strategy $\sigma$. It therefore follows that the sequence is both non-decreasing and converges uniformly. \qed
\end{proof}
In the proof of correctness we will use the fact that $n$ iterations of value iteration is equivalent to performing backward induction on the following game trees.
\begin{definition}\label{trees-def}
For any state $s$ and $n \in \Nset$, let $\game^\cC_{n,s}$ be the game tree corresponding to playing $\game^\cC$ for $n$ steps when starting from state $s$ and then terminating. 
\end{definition}
We can map any strategy profile $\sigma$ of $\game^\cC$ to a strategy profile of $\game^\cC_{n,s}$ by only considering the choices of the profile over the first $n$ steps when starting from state $s$. This mapping is clearly surjective, i.e.\ we can generate all profiles of $\game^\cC_{n,s}$, but is not injective.
We require the following lemma relating the values of the objectives $X^\theta_{i,n}$ and $X^\theta_i$ over $\game^\cC$ and $\game^\cC_{n,s}$ for any $n \in \Nset$ and $s \in S$.
\begin{lemma}\label{precomp-lem}
For any probabilistic or expected reachability nonzero-sum formula $\theta$, state $s$ of $\game^\cC$, strategy profile $\sigma$, $n \in \Nset$ and $1{\leq}i{\leq}m:$ $\Eset^{\sigma}_{\game^\cC,s}(X^\theta_{i,n}) = 
\Eset^{\sigma}_{\game^\cC_{n,s}}(X^\theta_i)$.
\end{lemma}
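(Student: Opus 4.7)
The plan is to show this by unfolding the two expectations against a common measure on finite paths of length $n$. Fix $\theta$, $s$, $\sigma$, $n$ and $i$. The first step is to observe that $X^\theta_{i,n}(\pi)$, as given in \defref{bounded-objective-def}, depends only on $\pi(0),\pi[0],\pi(1),\pi[1],\dots,\pi(n)$: for the probabilistic case the indicator ``$\exists k \leq n.\,\pi(k) \sat \phi^i$'' clearly involves only the first $n+1$ states, and for the reachability-reward case the value is either $\sum_{k=0}^{k^i} r(\pi,k)$ with $k^i \leq n-1$ (so only the first $n$ transitions contribute) or it is the constant $0$ (the $\infty$ branch has measure zero by \assumref{game-assum}). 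Consequently $X^\theta_{i,n}$ factors through the projection $\pi \mapsto \pi\!\upharpoonright n$ onto finite prefixes of length $n$.

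Next I would write the expectation as a sum over those prefixes. Since $\Prob^\sigma_{\game^\cC,s}$ assigns each cylinder $\mathit{Cyl}(\pi_n)$ (for a length-$n$ finite path $\pi_n$ starting in $s$) the product of the transition probabilities along $\pi_n$ weighted by the distributions $\sigma_j(\pi_n{\upharpoonright}k)(a_j^k)$ of the players, we obtain
\[
\Eset^{\sigma}_{\game^\cC,s}(X^\theta_{i,n}) \ = \ \sum_{\pi_n \in \fpaths_{\game^\cC,s}^{\sigma,\,n}} \Prob^\sigma_{\game^\cC,s}(\mathit{Cyl}(\pi_n)) \cdot \widehat X^\theta_{i}(\pi_n),
\]
where $\widehat X^\theta_{i}(\pi_n)$ denotes the (well-defined) common value of $X^\theta_{i,n}$ on all infinite extensions of $\pi_n$, and $\fpaths_{\game^\cC,s}^{\sigma,\,n}$ is the set of length-$n$ prefixes with positive $\sigma$-probability from $s$.

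For the right-hand side, the tree $\game^\cC_{n,s}$ of \defref{trees-def} is precisely the unfolding of $\game^\cC$ from $s$ truncated at depth $n$; its branches are in bijection with $\fpaths_{\game^\cC,s}^{\sigma,\,n}$ and the transition probabilities along each branch coincide with those in $\game^\cC$. The strategy profile $\sigma$ restricts to a profile on $\game^\cC_{n,s}$ inducing the same distribution on branches (this is the surjective mapping noted after \defref{trees-def}). Since each branch of $\game^\cC_{n,s}$ is a finite path of length $n$, the value $X^\theta_{i}$ on this branch unambiguously equals $\widehat X^\theta_i(\pi_n)$: the branch either already reaches $\phi^i$ within its $n$ steps (in which case $X^\theta_i$ evaluates to $1$ or to the accumulated reward $\sum_{k=0}^{k^i}r(\pi,k)$) or it does not (in which case, after discarding the probability-zero event of never reaching $\phi^i$, the value is $0$). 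Summing over branches yields the same expression displayed above, which gives the equality.

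The main technical obstacle will be pinning down the interpretation of the infinite-path objective $X^\theta_i$ on the finite-depth tree $\game^\cC_{n,s}$ in the reachability-reward case, where one must justify that branches which have not yet reached $\phi^i$ by depth $n$ contribute $0$ rather than $\infty$; this is handled cleanly by invoking \assumref{game-assum} to discard the measure-zero set of never-reaching paths, exactly mirroring how the ``$0$ otherwise'' clause of \defref{bounded-objective-def} is designed. Everything else is a routine rewriting of the cylinder-based probability measure, and no induction on $n$ is strictly needed.
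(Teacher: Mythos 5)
Your proposal is correct and takes essentially the same route as the paper, whose own proof is just the observation that the claim ``follows from the definitions'' of the $n$-step approximation and the depth-$n$ unfolding; your cylinder-set decomposition is the standard way of making that observation precise. The one point worth noting is that for the reachability-reward case the value a truncated branch of $\game^\cC_{n,s}$ receives when it has not yet reached $\Sat(\phi^i)$ is really fixed by convention to $0$ (matching the ``$0$ otherwise'' clause of \defref{bounded-objective-def}) rather than by a measure-zero argument --- \assumref{game-assum} is what you need on the $\game^\cC$ side to dismiss the $\infty$ branch, not on the finite tree, where such branches can have positive probability --- but this does not affect the validity of the argument.
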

\begin{proof}
The proof follows from \defdefref{bounded-objective-def}{trees-def}, in particular, we have that $X^\theta_{i,n}$ is the $n$-step approximations of $X^\theta_{i}$ and $\game^\cC_{n,s}$ corresponds to playing game $\game^\cC$ from state $s$ for $n$ steps. \qed
\end{proof}
We now define the strategy profiles synthesised during value iteration.
\begin{definition}\label{strats-def}
For any $n \in \Nset$ and $s \in S$, let $\sigma^{n,s}$ be the strategy profile generated for the game tree $\game^\cC_{n,s}$ (when considering value iteration as backward induction) and $\sigma^{n,\star}$ be the synthesised strategy profile for $\game^\cC$ after $n$ iterations.
\end{definition}
Before giving the proof of correctness we require the following results.
\begin{lemma}\label{backwards-lem}
For any state $s$ of $\game^\cC$, probabilistic or expected reachability nonzero-sum formula $\theta$ and $n \in \Nset$ we have that $\sigma^{n,s}$ is a subgame perfect SWNE of the CSG $\game^\cC_{n,s}$ for the objectives $(X^{\theta_1},\dots,X^{\theta_m})$.
\end{lemma}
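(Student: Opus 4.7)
The plan is to proceed by induction on $n$. At the base case $n=0$, the game tree $\game^\cC_{0,s}$ consists of the single initial node $s$ with no actions to be taken, so the (vacuous) profile $\sigma^{0,s}$ is trivially a subgame-perfect SWNE: every player's expected value equals $X^\theta_i$ evaluated on the length-$0$ path from $s$, and no unilateral deviation is even available.

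For the inductive step, suppose the result holds for $n-1$, i.e., $\sigma^{n-1,s'}$ is a subgame-perfect SWNE of $\game^\cC_{n-1,s'}$ for every $s' \in S$. The tree $\game^\cC_{n,s}$ decomposes as a root at $s$ where all coalitions move simultaneously, followed, after stochastic transition to some $s'$, by an independent copy of $\game^\cC_{n-1,s'}$. By construction, $\sigma^{n,s}$ agrees with $\sigma^{n-1,s'}$ on every proper subtree rooted at $s'$, so the inductive hypothesis gives subgame-perfectness at every internal node other than the root. It therefore suffices to verify the NE condition at the root itself, with the sub-tree strategies held fixed, and to verify that the social-welfare sum is maximal among all root NE.

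For this, I would observe that backward induction at iteration $n$ constructs exactly the NFG $\nfgame = (M,A^\cC,u)$ in which
\[
u_i(a) \;=\; \sum_{s' \in S} \delta^\cC(s,a)(s') \cdot \Eset^{\sigma^{n-1,s'}}_{\game^\cC_{n-1,s'}}(X^\theta_i),
\]
possibly with the correction terms $\mathbf{1}_D$ and $\mathbf{0}_E$ handling the players whose targets have already been (un)reached at $s$ (and for which the recursion collapses as in the equations of Section~\ref{mc-sect}). By the inductive hypothesis, these $\Eset$-values equal the utilities that the fixed sub-tree strategies actually deliver to player $i$ from $s'$ in $\game^\cC_{n-1,s'}$, and by \lemref{precomp-lem} these in turn equal $\Eset^{\sigma^{n-1,s'}}_{\game^\cC,s'}(X^\theta_{i,n-1})$. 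Consequently, for any alternative root choice $\tau_i$ for coalition $i$ (combined with the remaining sub-tree behaviour of $\sigma^{n,s}$), the expected value of $X^\theta_i$ from $s$ in $\game^\cC_{n,s}$ equals exactly $u_i(\sigma^{n,s}_{-i,\mathrm{root}}[\tau_i])$ in the NFG. Hence NE (resp.\ social-welfare maximality) in $\nfgame$ transfers directly to NE (resp.\ social-welfare maximality) at the root of $\game^\cC_{n,s}$, and $\sigma^{n,s}$ is chosen precisely as an SWNE of $\nfgame$.

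The main obstacle I anticipate is the bookkeeping for the sets $D$ and $E$: the value-iteration clauses that short-circuit when new players reach their goal or become unable to, and the reachability-reward case where a player's contribution is frozen once their target is hit. Handling these cleanly requires showing that for each player $i \in D$, any continuation strategy yields utility $1$ (probabilistic case) or the already-accumulated reward (reachability case), so that replacing the recursion with the constant $\mathbf{1}_D$ (or freezing $X^\theta_i$) does not distort any $u_i(\cdot)$ in the NFG; symmetrically for $E$. Once this is verified—straightforwardly from the definitions of $X^\theta_i$ and $X^\theta_{i,n}$ in \defref{bounded-objective-def} and from Assumption~\ref{game-assum}—the induction closes and the lemma follows.
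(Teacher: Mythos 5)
Your proposal is correct and follows essentially the same route as the paper, which simply observes that $n$ steps of value iteration amount to backward induction on $\game^\cC_{n,s}$ with SWNE selected at each node and cites the classical result that backward induction yields a subgame-perfect NE; you unpack that citation into an explicit induction on $n$. The only point worth making fully explicit is the one-shot deviation step at the root: a deviation that changes both the root action and behaviour inside the subtrees is still dominated by the root-only deviation you consider, because the inductive hypothesis already makes each $\sigma^{n-1,s'}_i$ a best response within its subtree, so the NFG utilities $u_i$ genuinely upper-bound all full-tree deviations.
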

\begin{proof}
The result follows from the fact that, for any $n \in \Nset$ and $s \in S$, the value iteration procedure selects SWNE, $n$ steps of value iteration corresponds to performing backward induction for the objectives $(X^{\theta_1},\dots,X^{\theta_m})$ in the game $\game^\cC_{n,s}$ and backward induction returns a subgame perfect NE~\cite{SW+01,NMK+44}. \qed
\end{proof}
The following proposition demonstrates that value iteration converges and depends on \assumref{game-assum}. Without this assumption convergence cannot be guaranteed as demonstrated by the counterexamples in~\cite{KNPS18b}. Although value iteration converges, unlike value iteration for MDPs or zero-sum games the generated sequence of values is not necessarily non-decreasing. 
\begin{proposition}\label{convergence-prop}
For any probabilistic or expected reachability nonzero-sum formula $\theta$ and state $s$, the sequence $\langle \V_{\game^\cC}(s,\theta,n) \rangle_{n \in \Nset}$ converges.
\end{proposition}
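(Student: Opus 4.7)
The plan is to show that the sequence $\langle \V_{\game^\cC}(s,\theta,n) \rangle_{n \in \Nset}$ is Cauchy in $\Rset^m$, so that convergence follows from completeness. The starting observation is that, by combining \lemref{backwards-lem} with \lemref{precomp-lem}, the vector $\V_{\game^\cC}(s,\theta,n)$ coincides with the SWNE value vector computed by backward induction on the finite game tree $\game^\cC_{n,s}$ for the objectives $(X^{\theta_1},\ldots,X^{\theta_m})$; equivalently, it is an SWNE value vector of the full game $\game^\cC$ for the $n$-step truncated objectives $(X^\theta_{1,n},\ldots,X^\theta_{m,n})$ starting in state $s$.

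Next, I will apply \lemref{epsilon-lem}, which under \assumref{game-assum} yields uniform convergence of the truncated expected values: given $\varepsilon>0$, there exists $N$ such that for every $n \geq N$, profile $\sigma$, state $s$, and player $i$,
\[
0 \;\leq\; \Eset^{\sigma}_{\game^\cC,s}(X^\theta_i) - \Eset^{\sigma}_{\game^\cC,s}(X^\theta_{i,n}) \;\leq\; \varepsilon .
\]
From this I aim to conclude that for all $n, n' \geq N$ the vectors $\V_{\game^\cC}(s,\theta,n)$ and $\V_{\game^\cC}(s,\theta,n')$ differ by at most a constant multiple of $\varepsilon$ in each component. The intuition is that the trees $\game^\cC_{n,s}$ and $\game^\cC_{n',s}$ agree on every path along which all players' objectives have already been determined, and by \assumref{game-assum} the probability of this happening within $n$ steps tends uniformly to $1$, so any backward-induction SWNE on the longer tree can be obtained by modifying only a vanishingly small tail of an SWNE on the shorter tree.

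The main obstacle will be precisely this last step, because SWNE values of a normal-form game can change discontinuously under perturbation of the utility function: small perturbations may alter the set of Nash equilibria, and different NE need not share individual (only summed) utilities. To overcome this I plan to exploit two features of the setting. First, the sum of SWNE values equals the supremum of a continuous objective (the summed utility) over the set of NE profiles, which is closed under uniform payoff perturbation, so the summed value is upper-semicontinuous in the perturbation parameter; this, together with the uniform bound above, gives Cauchy-ness of the summed value. Second, for the individual components, I will combine the uniform tail bound with the finiteness of $S$ and of each $A_i$ — so only finitely many supports are ever chosen by the algorithm across states — to run an accumulation-point argument: every accumulation point of the bounded sequence $\langle \V_{\game^\cC}(s,\theta,n) \rangle_{n \in \Nset}$ can be shown to satisfy the SWNE fixed-point equations of the infinite-horizon game, under the same support-selection regime as the algorithm, yielding uniqueness of the limit and hence convergence of the full sequence.
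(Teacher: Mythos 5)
Your skeleton matches the paper's: both reduce $\V_{\game^\cC}(s,\theta,n)$ to subgame-perfect SWNE values of the truncated game trees $\game^\cC_{n,s}$ via \lemref{backwards-lem} and \lemref{precomp-lem}, and both then invoke the uniform convergence supplied by \lemref{epsilon-lem}. The paper packages the final step differently: it views each $\game^\cC_{n,s}$ as an infinite-action NFG $\nfgame_{n,s}$ over the \emph{common} action space consisting of the strategies of $\game^\cC$ (the surjectivity of the profile mapping is what legitimises this), notes that the utility functions of these NFGs converge uniformly, and concludes that the SWNE values converge. You have correctly identified that this last inference is the delicate point, since equilibrium values are not continuous in the payoffs.

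However, the two devices you propose to close that gap do not work as stated. First, upper hemicontinuity of the Nash correspondence yields only that the summed SWNE value is upper semicontinuous in the payoff perturbation, i.e.\ a one-sided $\limsup$ bound; the matching lower bound fails in general because equilibria of the limit game need not be approximable by equilibria of the perturbed games (lower hemicontinuity of the Nash correspondence fails, e.g.\ for weakly dominated equilibria), so Cauchy-ness of the summed value does not follow. Second, your accumulation-point argument hinges on the ``SWNE fixed-point equations'' having a unique solution under the algorithm's support-selection regime, but for nonzero-sum objectives such equations generically admit multiple solutions --- non-uniqueness of equilibrium values is precisely what makes this proposition nontrivial --- and the finiteness of states and supports shows at best that some subsequence of $\langle \V_{\game^\cC}(s,\theta,n) \rangle_{n \in \Nset}$ converges, not that all accumulation points coincide. (To be fair, the paper's own conclusion at this point is also terse; it implicitly leans on the fact that the computed values are those of the specific profiles $\sigma^{n,s}$ produced by backward induction, together with \assumref{game-assum}, rather than on any continuity of the abstract SWNE value correspondence.)
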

\begin{proof}
For any state $s$ and $n \in \Nset$ we can consider $\game^\cC_{n,s}$ as an $m$-player infinite-action NFG $\nfgame_{n,s}$ where for $1 {\leq} i {\leq} m$:
\begin{itemize}
\item
the set of actions of player $i$ equals the set of strategies of player $i$ in $\game^\cC$;
\item for the action pair $(\sigma_1,\sigma_2)$, the utility function for player $i$ returns $\Eset^{\sigma}_{\game^\cC_{n,s}}(X^\theta_i)$.
\end{itemize}
The correctness of this construction relies on the mapping of strategy profiles from the game $\game^\cC$ to $\game^\cC_{n,s}$ being surjective.  Using \lemref{epsilon-lem}, we have that the sequence $\langle \nfgame_{n,s} \rangle_{n \in N}$ of NFGs converges uniformly, and therefore, since $\V_{\game^\cC}(s,\theta,n)$ are subgame perfect SWNE values of $\game^\cC_{n,s}$ (see \lemref{backwards-lem}), the sequence $\langle \V_{\game^\cC}(s,\theta,n) \rangle_{n \in \Nset}$ also converges. \qed
\end{proof}
A similar convergence result to \propref{convergence-prop} has been shown for discounted properties of two-player games in~\cite{FL83}.
\begin{theorem}
For a given probabilistic or expected reachability nonzero-sum formula $\theta$ and $\varepsilon{>}0$, there exists $N \in \Nset$ such that for any $n {\geq} N$ the strategy profile $\sigma^{n,\star}$ is a subgame perfect $\varepsilon$-SWNE for $\game^\cC$ and the objectives $(X^{\theta_1},\dots,X^{\theta_m})$.
\end{theorem}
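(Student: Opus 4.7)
The plan is to chain together Lemmas~\ref{epsilon-lem}, \ref{precomp-lem}, and \ref{backwards-lem} to transfer the exact subgame-perfect SWNE property that holds on the truncated tree $\game^\cC_{n,s}$ to an approximate one on the full game $\game^\cC$, with the slack controlled by the uniform convergence granted in Lemma~\ref{epsilon-lem}. Given $\varepsilon>0$, I would first invoke Lemma~\ref{epsilon-lem} to pick $N\in\Nset$ such that, for every $n\geq N$, every state $s$, every profile $\sigma$, and every $i$, we have $0\leq \Eset^{\sigma}_{\game^\cC,s}(X^\theta_i) - \Eset^{\sigma}_{\game^\cC,s}(X^\theta_{i,n}) \leq \varepsilon$. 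Fix such an $n$ and write $\tau=\sigma^{n,\star}$.

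For the NE condition at an arbitrary state $s$, an arbitrary coalition $i$, and an arbitrary unilateral deviation $\sigma_i$, set $\tau'=\tau_{-i}[\sigma_i]$. Combining the monotonicity half of Lemma~\ref{epsilon-lem} (applied to $\tau$) with the uniform upper bound (applied to $\tau'$) yields
\[
\Eset^{\tau'}_{\game^\cC,s}(X^\theta_i) - \Eset^{\tau}_{\game^\cC,s}(X^\theta_i)
\ \leq\ \bigl(\Eset^{\tau'}_{\game^\cC,s}(X^\theta_{i,n}) - \Eset^{\tau}_{\game^\cC,s}(X^\theta_{i,n})\bigr) + \varepsilon.
\]
Lemma~\ref{precomp-lem} rewrites each $n$-step expectation on $\game^\cC$ as the exact expectation on the finite game tree $\game^\cC_{n,s}$, so it suffices to show the parenthesised difference is non-positive, i.e.\ that $\tau$ is immune to unilateral deviation on $\game^\cC_{n,s}$ for the objective $X^\theta_i$.

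The heart of the argument, and the main obstacle, is therefore to identify the restriction of $\tau=\sigma^{n,\star}$ to the first $n$ steps starting from $s$ with the backward-induction profile $\sigma^{n,s}$ on $\game^\cC_{n,s}$. This identification holds because value iteration in $\game^\cC$ is implemented exactly as backward induction on the $n$-step unfolding: at each iteration $k$, the NFG solved in state $s$ uses the values computed at iteration $k-1$, and the synthesised $\sigma^{n,\star}$ records the SWNE action distribution chosen at that NFG solve, so the action distribution used at depth $n-k$ of the tree rooted at $s$ coincides with the choice made by $\sigma^{n,s}$. Consequently $\tau$ and $\sigma^{n,s}$ induce the same probability measure on paths of $\game^\cC_{n,s}$, and Lemma~\ref{backwards-lem} immediately gives $\Eset^{\tau'}_{\game^\cC_{n,s}}(X^\theta_i)\leq \Eset^{\tau}_{\game^\cC_{n,s}}(X^\theta_i)$. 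Substituting back produces the $\varepsilon$-NE inequality of Definition~\ref{sem-def}, and since $s$ was arbitrary the profile is subgame perfect.

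Finally, to upgrade from $\varepsilon$-NE to $\varepsilon$-SWNE I would compare sums of utilities: Lemma~\ref{backwards-lem} guarantees that $\sigma^{n,s}$ maximises $\sum_i \Eset^{\sigma}_{\game^\cC_{n,s}}(X^\theta_i)$ over \emph{all} NE of $\game^\cC_{n,s}$, and by the same two-sided application of Lemma~\ref{epsilon-lem} combined with Proposition~\ref{convergence-prop}, any competitor NE (or $\varepsilon$-NE) of $\game^\cC$ has sum within $m\varepsilon$ of its truncated counterpart, so $\tau$ is social-welfare-optimal up to an additive error that can be absorbed by enlarging $N$ (replacing $\varepsilon$ by $\varepsilon/(m+1)$ at the outset). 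The minimisation case handled via \defref{scne-def} is entirely symmetric, obtained by negating utilities and invoking the SCNE/SWNE duality already established.
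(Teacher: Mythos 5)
Your proposal is correct and follows essentially the same route as the paper's proof: both identify the restriction of $\sigma^{n,\star}$ to the $n$-step unfolding with the backward-induction profile $\sigma^{n,s}$, transfer the exact NE property on $\game^\cC_{n,s}$ back to $\game^\cC$ via \lemref{precomp-lem}, and absorb the truncation error using the uniform convergence of \lemref{epsilon-lem} (your per-deviation difference inequality is just an unfolding of the paper's chain of suprema). The social-welfare-optimality step is treated with comparable brevity in both arguments, so there is no substantive divergence.
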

\begin{proof}
Consider any $\varepsilon{>}0$. From \defref{strats-def} for any $s\in S$, $n {\geq} \Nset$ and $1{\leq}i{\leq}m$:
\begin{equation}\label{1-eqn}
 \Eset^{\sigma^{n,\star}}_{\game^\cC,s}(X^\theta_i) = \Eset^{\sigma^{n,s}}_{\game^\cC_{n,s}}(X^\theta_i) \, .
\end{equation}
For any $k \in \Nset$ and $s \in S$, using \lemref{backwards-lem} we have that $\sigma^{k,s}$ is a NE of $\game^\cC_{k,s}$, and therefore for any $k \in \Nset$, $s\in S$ and $1{\leq}i{\leq}m$:
\begin{equation}\label{nash-eqn}
\Eset^{\sigma^{k,s}}_{\game^{C}_{k,s}}(X^\theta_i) 
\ = \
\sup\nolimits_{\sigma_i \in \Sigma_i^{\game^{\scale{.75}{C}}_{\scale{.75}{k,s}}}} \Eset^{\sigma^{k,s}_{-i}[\sigma_i]}_{\game^\cC_{k,s}}(X^\theta_i) \, .
\end{equation}
From \lemref{epsilon-lem} there exists $N \in \Nset$ such that for any $n {\geq} N$, $s \in S$ and $1{\leq}i{\leq}m$:
\begin{equation}\label{2-eqn}
\sup\nolimits_{\sigma_i \in \Sigma_i^{\game^{\scale{.75}{C}}}} \Eset^{\sigma^{n,\star}_{-i}[\sigma_i]}_{\game^\cC,s}(X^\theta_i) - \sup\nolimits_{\sigma_i \in \Sigma_i^{\game^{\scale{.75}{C}}}} \Eset^{\sigma^{n,\star}_{-i}[\sigma_i]}_{\game^\cC,s}(X^\theta_{i,n}) 
\ \leq \ \varepsilon \, .
\end{equation}
Therefore, for any $n {\geq} N$, $s \in S$ and $1{\leq}i{\leq}m$, using \eqnref{1-eqn} we have:
\begin{align*}
\Eset^{\sigma^{n,\star}}_{\game^\cC,s}(X^\theta_i) \ \ & = \ \ \Eset^{\sigma^{n,s}}_{\game^\cC_{n,s}}(X^\theta_i)   \\
& = \ \ \sup\nolimits_{\sigma_i \in \Sigma_i^{\game^{\scale{.75}{C}}_{\scale{.75}{n,s}}}} \Eset^{\sigma^{n,s}_{-i}[\sigma_i]}_{\game^\cC_{n,s}}(X^\theta_i)  & \mbox{by \eqnref{nash-eqn}} \\
& = \ \sup\nolimits_{\sigma_i \in \Sigma_i^{\game^{\scale{.75}{C}}}} \Eset^{\sigma^{n,\star}_{-i}[\sigma_i]}_{\game^\cC,s}(X^\theta_{i,n})  & \mbox{by \lemref{precomp-lem}} \\
& \leq \ \ \sup\nolimits_{\sigma_i \in \Sigma_i^{\game^{\scale{.75}{C}}}} \Eset^{\sigma^{n,\star}_{-i}[\sigma_i]}_{\game^\cC,s}(X^\theta_i) - \varepsilon & \mbox{by \eqnref{2-eqn} since $n {\geq} N$}
\end{align*}
and hence, since $\varepsilon{>}0$, $s \in S$ and $1{\leq}i{\leq}m$ were arbitrary, $\sigma^{n,\star}$ is a subgame perfect $\varepsilon$-NE. It remains to show that the strategy profile is a subgame perfect social welfare optimal $\varepsilon$-NE, which follows from the fact that when solving the bimatrix games during value iteration social welfare optimal NE are returned. \qed
\end{proof}

\clearpage

\end{document}